\def \beq{\begin{equation}}
\def \eeq{\end{equation}}
\def\and {{\rm \; and \;}}
\newcommand{\R}{{\mathbb R}}
\newcommand{\y}{{\bf y}}
\newcommand{\x}{{\bf x}}
\newcommand{\bk}{{\bf k}}
\newtheorem{theorem}{Theorem}[section]
\newtheorem{proposition}[theorem]{Proposition}
\newtheorem{corollary}[theorem]{Corollary}
\newtheorem{lemma}[theorem]{Lemma}
\theoremstyle{definition}
\numberwithin{equation}{section}
\begin{document}

\noindent 
\begin{center}
\textbf{\large On the Lipschitz continuity of spectral bands of
  Harper-like and  magnetic Schr\"odinger operators}
\end{center}

\begin{center}
June 22, 2010
\end{center}

\vspace{0.5cm}

\begin{center}
\textbf{ 
Horia D. Cornean\footnote{Department of Mathematical Sciences,
  Aalborg University, Fredrik Bajers Vej 7G, 9220 Aalborg, Denmark}}

\end{center}

\begin{abstract}
We show for a large class of discrete Harper-like and continuous
magnetic Schr\"odinger operators that their band edges are Lipschitz
continuous with respect to the intensity of the external constant
magnetic field. We generalize a result obtained by
J. Bellissard in 1994, and give examples in favor of a recent conjecture of
G. Nenciu.
\end{abstract}

\noindent

\section{Introduction and the main results}
\noindent {\it  Harper-like operators}. Let $\Gamma\subset
\mathbb{\R}^2$ be 
a (possibly irregular) lattice which has the property that there
exists an injective map 
$F:\Gamma\mapsto \mathbb{Z}^2$ such that 
$|F(\gamma)-\gamma|<1/2$.  The Hilbert space is $l^2(\Gamma)$. 

The elements of the canonical basis in $l^2(\Gamma)$ are denoted by
$\{\delta_{\x}\}_{\x\in\Gamma}$, where $\delta_{\x}(\y)=1$ if $\y=\x$ and zero
otherwise. In the discrete case, to any bounded self-adjoint operator 
$H\in B(l^2(\Gamma))$ it corresponds a bounded and symmetric kernel 
$H(\x,\x')=\langle H\delta_{\x'},\delta_{\x}\rangle=\overline{H(\x',\x)}$. We will
extensively use the Schur-Holmgren upper bound for the norm of a
self-adjoint operator:
\begin{equation}\label{novem1}
||H||\leq \sup_{\x'\in \Gamma}
\sum_{\x\in\Gamma}
  |H(\x,\x')|.
\end{equation}

Denote by 
$\langle \x
-\x_0\rangle^\alpha=[1+(\x-\x_0)^2]^{\frac{\alpha}{2}}$, $\alpha \geq 0$. We 
define $\mathcal{C}^\alpha$ to be the set of bounded
and self-adjoint operators $H\in B(l^2(\Gamma))$ which have the
property that their kernels obey a weighted Schur-Holmgren type estimate: 
\begin{equation}\label{febr1}
||H||_{\mathcal{C}^\alpha}:=\sup_{\x'\in \Gamma}
\sum_{\x\in\Gamma}\langle \x -\x'\rangle^\alpha
  |H(\x,\x')| <\infty.
\end{equation}

We also define the space $\mathcal{H}^\alpha$ which 
contains bounded and self-adjoint
operators $H$ which obey:
\begin{equation}\label{febr2}
||H||_{\mathcal{H}^\alpha}:=
\sup_{\x'\in \Gamma}\left \{\sum_{\x\in \Gamma}\langle \x -\x'\rangle^{2\alpha}
  |H(\x,\x')|^2\right \}^{\frac{1}{2}}<\infty.
\end{equation}
The flux of a unit magnetic field
orthogonal to the plane through a triangle generated by 
$\x$, $\x'$ and the origin is given by:
\begin{align}\label{iunie1}
\varphi({\bf x},{\bf x}'):=
-\frac{1}{2}\left (x_1\:x_2'-x_2\:x_1'\right )=-\varphi(\x',\x).
\end{align}
Note the important additive identity:
\begin{align}\label{iunie2}
\varphi({\bf x},\y)+\varphi(\y,\x')&=\varphi(\x,\x')+
\varphi(\x-\y,\y-\x'),\\
|\varphi(\x-\y,\y-\x')|&\leq \frac{1}{2}|\x-\y|\;|\y-\x'|.\nonumber
\end{align}

Let $K\in \mathcal{C}^0$. Let its 
kernel be $K(\x,\x')$. We are interested in a family of Harper-like operators 
$\{K_b\}_{b\in\R}$ given by the kernels $e^{ib
  \varphi(\x,\x')}K(\x,\x')$. Clearly, $\{K_b\}_{b\in\R}\subset 
\mathcal{C}^0$. The usual Harper operator lives in $l^2(\mathbb{Z}^2)$,
and its generating kernel has the form $K(\x,\x')=k(\x-\x')$ where $k(\x)$ equals $1$ if $|\x|=1$, 
and $0$ otherwise.

In Lemma \ref{lema1} we will show that
$\mathcal{H}^\alpha\subset\mathcal{C}^0 $ if $\alpha>1$. Now 
here is the first main result of our paper: 
\begin{theorem}\label{teorema1} Let $\alpha >3$ and  
$K\in \mathcal{H}^\alpha$. Construct the
corresponding family of Harper-like operators $\{K_b\}_{b\in\R}$. Then we have:

{\rm i}. The resolvent set $\rho(K_b)$
is stable; more precisely, if ${\rm dist}(z,\sigma(K_{b_0}))\geq \epsilon>0$
then there exist $\delta>0$ and $\eta>0$ such that ${\rm
  dist}(z,\sigma(K_{b}))\geq \eta$ whenever $|b-b_0|<\delta$. 

{\rm ii}.  Define $E_+(b):=\sup
\sigma(K_b)$ and $E_-(b):=\inf
\sigma(K_b)$. Then $E_\pm$ are Lipschitz functions of $b$. 

{\rm iii}. Let $\alpha>4$. Assume that $K_{b_0}$ has a gap in the
spectrum of the form $(e_-(b_0),e_+(b_0))$, where $e_\pm(b_0)\in
\sigma(K_{b_0})$ are the gap edges. Then as long as the gap is not
closing by varying $b$ in a closed interval $I$ containing $b_0$, the
operator $K_{b}$ will have a gap 
$(e_-(b),e_+(b))$ whose edges are Lipschitz functions of $b$ on $I$.
\end{theorem}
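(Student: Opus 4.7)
The plan is to control $(K_b-z)^{-1}$ by a magnetic perturbation scheme anchored on the additive flux identity (1.5). For $z\in\rho(K_{b_0})$, write $R_{b_0}(z)=(K_{b_0}-z)^{-1}$ and introduce the gauge-twisted approximate resolvent $\tilde R_b(z)$ with kernel $e^{i(b-b_0)\varphi(\x,\x')}R_{b_0}(z;\x,\x')$. A direct computation in which (1.5) reshuffles the three phases collected from $K_b$ and $\tilde R_b$ produces $(K_b-z)\tilde R_b(z)=\mathrm{Id}+S_b(z)$, with
\[
 S_b(z;\x,\x') = e^{i(b-b_0)\varphi(\x,\x')}\sum_{\y}\bigl(e^{i(b-b_0)\varphi(\x-\y,\y-\x')}-1\bigr)\,K(\x,\y)\,R_{b_0}(z;\y,\x').
\]
Combining $|e^{it}-1|\le|t|$ with $|\varphi(\x-\y,\y-\x')|\le\tfrac12|\x-\y|\,|\y-\x'|$ and factorising the double sum via the Schur--Holmgren bound (1.1) yields
\[
 \|S_b(z)\|\le \tfrac{|b-b_0|}{2}\,\Bigl(\sup_{\y}\sum_{\x}|\x-\y|\,|K(\x,\y)|\Bigr)\,\Bigl(\sup_{\x'}\sum_{\y}|\y-\x'|\,|R_{b_0}(z;\y,\x')|\Bigr).
\]
The first factor is finite by a Cauchy--Schwarz argument in the same spirit as Lemma \ref{lema1} (it only needs $\alpha>2$, which is provided by $\alpha>3$); the second factor demands a weighted off-diagonal bound on the resolvent kernel that I would establish as a separate preliminary lemma, by iterating the resolvent identity $R_{b_0}(z)=z^{-1}\bigl(K_{b_0}R_{b_0}(z)-\mathrm{Id}\bigr)$ inside a weighted Schur--Holmgren space (or by a Combes--Thomas-type argument tracking polynomial weights).

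With that preparation in hand, part (i) follows by inverting $\mathrm{Id}+S_b(z)$ as a Neumann series once $|b-b_0|$ is small enough, producing a right inverse of $K_b-z$ that self-adjointness promotes to a genuine resolvent. For part (ii), I would take real $z>E_+(b_0)$ and set $\epsilon:=z-E_+(b_0)=\mathrm{dist}(z,\sigma(K_{b_0}))$, so that $\|R_{b_0}(z)\|=1/\epsilon$ by spectral calculus. The critical point is to track the $\epsilon$-dependence in the preparatory lemma in order to obtain the \emph{linear} bound $\sup_{\x'}\sum_\y|\y-\x'|\,|R_{b_0}(z;\y,\x')|\le C/\epsilon$; this is where $\alpha>3$ is consumed, one weight power being absorbed by the factor $|\y-\x'|$ and the remaining $\alpha-1>2$ powers securing absolute convergence of the two-dimensional tails. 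Then $\|S_b(z)\|\le C'|b-b_0|/\epsilon<1$ as soon as $\epsilon>C'|b-b_0|$, forcing $z\in\rho(K_b)$ and hence $E_+(b)\le E_+(b_0)+C'|b-b_0|$; exchanging the roles of $b$ and $b_0$ gives the matching lower bound, and $E_-$ is handled identically on the other side of the spectrum.

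Part (iii) runs on the same scheme with $z$ chosen in the open gap $(e_-(b_0),e_+(b_0))$ and $\epsilon=\mathrm{dist}(z,\sigma(K_{b_0}))$; the conclusion $z\in\rho(K_b)$ whenever $\epsilon>C|b-b_0|$ keeps the spectrum of $K_b$ out of a Lipschitz strip around each gap edge, while the swap $b\leftrightarrow b_0$ prevents the edges from moving too slowly. The stronger hypothesis $\alpha>4$ enters because the preparatory resolvent-decay bound now has to hold for $z$ in a bounded component of $\rho(K_{b_0})$, so the kernel must be controlled uniformly against perturbations on \emph{both} sides; one extra weight is consumed by the corresponding additional commutator or resolvent-identity step. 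The main obstacle of the whole programme is precisely that preparatory weighted-decay estimate on $R_{b_0}(z)$ with constant scaling only as $1/\mathrm{dist}(z,\sigma(K_{b_0}))$: a crude Combes--Thomas bound gives exponential decay at rate proportional to $\epsilon$, which inflates the weighted sum to order $1/\epsilon^{3}$ and destroys the Lipschitz scaling. One therefore needs an argument that preserves the polynomial decay of $K$ throughout, with careful bookkeeping of how many weight powers are lost per inversion step, and it is this bookkeeping that presumably pins down the explicit thresholds $\alpha>3$ and $\alpha>4$.
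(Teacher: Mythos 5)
Your treatment of part (i) is essentially the paper's: the approximate resolvent $\tilde R_b(z)$ with gauge-twisted kernel, the additive flux identity to produce $(K_b-z)\tilde R_b(z)=\mathrm{Id}+S_b(z)$, and the weighted Schur--Holmgren bound using a preliminary decay estimate on the resolvent kernel (Proposition~\ref{prop1}). That part is fine.

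For parts (ii) and (iii), however, your plan has a genuine gap, and it is precisely the one you flag at the end. You propose to take real $z$ just above $E_+(b_0)$, set $\epsilon=z-E_+(b_0)$, and hope for a weighted resolvent bound
$\sup_{\x'}\sum_\y |\y-\x'|\,|R_{b_0}(z;\y,\x')|\le C/\epsilon$, so that $\|S_b(z)\|\lesssim |b-b_0|/\epsilon<1$ gives $z\in\rho(K_b)$ when $\epsilon\gtrsim |b-b_0|$. That linear $1/\epsilon$ scaling of the \emph{weighted} kernel norm is not available. Any off-diagonal-weighted bound on $(K-z)^{-1}$ must pay for each weight power with extra inverse powers of $\mathrm{dist}(z,\sigma(K))$ (the paper's own Proposition~\ref{prop1} gives $1/\epsilon^{\alpha+2}$, and even Combes--Thomas only yields decay length $\propto 1/\epsilon$). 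This is not a defect of crude estimation; it is exactly why the resolvent-near-the-spectrum scheme caps out at $1/2$-H\"older in earlier work (Avron--van Mouche--Simon, Helffer--Sj\"ostrand, Nenciu's logarithmic refinement). There is no "careful bookkeeping" that rescues it, so your programme as stated does not close.

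The paper does something structurally different. For (ii), it first replaces $K$ and $K_b$ by cut-off versions $\widehat{K}_b$, $\widetilde{K}_b$ whose kernels vanish for $|\x-\x'|>b^{-1/2}$, with an $O(b)$ error in operator norm (this is where the $\mathcal{C}^2$ bound coming from $\alpha>3$ is used). The crucial ingredient is then the representation of the phase $e^{ib\varphi(\x,\x')}$ through the heat kernel $G_b$ of the \emph{continuous} magnetic Laplacian, equation~\eqref{ianu3}. This converts the quadratic form $\langle \widetilde{K}_b\psi,\psi\rangle$ into an integral over a one-parameter family of operators $A_b(t)$, giving $\sup\sigma(\widetilde{K}_b)\le\sup\sigma(A_b(t))$ for all $t$; at $t=b^{-1}$ one checks $\|A_b(b^{-1})-\widehat{K}_b\|\le Cb$ on the cut-off support, and Lemma~\ref{lemasupreme} turns this operator-norm $O(b)$ estimate directly into Lipschitz control of the spectral edge, without ever taking $z$ close to the spectrum. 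Part (iii) is then handled not by gap chasing with resolvents but by reducing to (ii): the Riesz projection $P_b$ and the operator $D_b=K_bP_b-\lambda P_b$ are shown to be within $O(b)$ in norm of a genuine Harper-like family $(D_0)_b$ built from a localized kernel, and (ii) applies to that family (this reduction is where $\alpha>4$ is used, to guarantee the Riesz integrand stays in $\mathcal{H}^{\beta'}$ with $\beta'>3$). The missing idea in your proposal is this heat-kernel mechanism, which is what lets one bypass the $1/\epsilon$-loss obstruction you correctly identified.
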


\vspace{0.5cm}

{\bf Remark}. Denoting by $\delta b=b-b_0$, then according to our notations 
we have that $K_b=\left (K_{b_0}\right )_{\delta b}$. It means that it
is enough to prove spectral stability and Lipschitz properties
near $b_0=0$.

\vspace{0.5cm}

We can complicate the setting by allowing the
generating kernel to depend on $b$. 
\begin{corollary}\label{corolar1}
Assume that the generating kernel $K(\x,\x';b)$ obeys all the
spatial localization 
conditions of Theorem \ref{teorema1}, uniformly in $b\in \R$. Moreover,
assume that it also satisfies an extra 
condition: 
\begin{equation}\label{aprilie1}
 \sup_{\x'\in\Gamma}\sum_{\x\in\Gamma}|K(\x,\x';b)-K(\x,\x';b_0)|\;
 \leq C\;|b-b_0|,\quad |b-b_0|\leq 1.
\end{equation}
Consider the family $\{K_b\}_{b\in\R}$ generated by $e^{ib\varphi(\x,\x')}
 K(\x,\x';b)$. Then Theorem \ref{teorema1} holds true for $K_b$.  
\end{corollary}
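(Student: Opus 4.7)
The plan is to reduce Corollary~\ref{corolar1} to Theorem~\ref{teorema1} by a kernel-freezing perturbation argument. For each reference value $b_0\in\R$, introduce the auxiliary Harper-like family $\{\hat K_b\}_{b\in\R}$ with kernels $e^{ib\varphi(\x,\x')}K(\x,\x';b_0)$. By the uniform spatial localization hypothesis, the frozen generating operator with kernel $K(\cdot,\cdot;b_0)$ lies in $\mathcal{H}^\alpha$, so Theorem~\ref{teorema1} applies verbatim to $\{\hat K_b\}_b$; moreover $\hat K_{b_0}=K_{b_0}$. The key estimate is the norm bound on the difference $K_b-\hat K_b$, whose kernel is $e^{ib\varphi(\x,\x')}[K(\x,\x';b)-K(\x,\x';b_0)]$: since $|e^{ib\varphi}|=1$, the Schur--Holmgren inequality~\eqref{novem1} combined with hypothesis~\eqref{aprilie1} yields
\[
\|K_b-\hat K_b\|\leq C\,|b-b_0|,\qquad |b-b_0|\leq 1.
\]

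From here each part of Theorem~\ref{teorema1} transfers to $K_b$ by classical self-adjoint perturbation arguments. Part (i) follows from the triangle-style inequality $\mathrm{dist}(z,\sigma(K_b))\geq \mathrm{dist}(z,\sigma(\hat K_b))-\|K_b-\hat K_b\|$, using Theorem~\ref{teorema1}(i) for $\hat K_b$ to control the first term. For part (ii), the standard bounds $|\sup\sigma(A)-\sup\sigma(B)|\leq\|A-B\|$ and its infimum analogue, together with $\hat K_{b_0}=K_{b_0}$ and Theorem~\ref{teorema1}(ii) for $\hat K_b$, give
\[
|E_\pm(b)-E_\pm(b_0)|\leq \|K_b-\hat K_b\|+|\hat E_\pm(b)-\hat E_\pm(b_0)|\leq (C+L)\,|b-b_0|.
\]
For part (iii) I would invoke the classical fact that a self-adjoint perturbation $V$ of norm less than half of a spectral gap moves each gap edge by at most $\|V\|$; applied to $V=K_b-\hat K_b$ this yields $|e_\pm(b)-\hat e_\pm(b)|\leq C|b-b_0|$, and combining with the Lipschitz conclusion of Theorem~\ref{teorema1}(iii) for $\hat K_b$ produces Lipschitz continuity of $e_\pm(b)$ on a neighborhood of $b_0$. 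A compactness/patching argument then extends this to the full closed interval $I$ with a uniform Lipschitz constant.

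The one nontrivial bookkeeping point is in part~(iii): Theorem~\ref{teorema1}(iii) applied to $\hat K_b$ requires that the gap of the frozen-kernel family itself does not close on a neighborhood of $b_0$. This is not an extra assumption but a consequence of Theorem~\ref{teorema1}(i) for $\hat K_b$: since $\hat K_{b_0}=K_{b_0}$ is gapped at the specified location, any interior point of that gap remains at positive distance from $\sigma(\hat K_b)$ for $b$ close to $b_0$, so the corresponding gap of $\hat K_b$ persists locally. I expect this verification, together with the final patching step that yields a uniform Lipschitz constant on the compact $I$, to be the only place where one must be careful; everything else is a direct consequence of the norm estimate above and Theorem~\ref{teorema1} applied to the frozen family.
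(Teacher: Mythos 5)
Your proposal is correct and its core reduction is the same as the paper's: freeze the generating kernel at $b_0$, introduce the auxiliary Harper-like family $\hat K_b$ with kernel $e^{ib\varphi(\x,\x')}K(\x,\x';b_0)$ (the paper writes $\widetilde K_b$ and normalizes to $b_0=0$), note $\hat K_{b_0}=K_{b_0}$, and use Schur--Holmgren plus \eqref{aprilie1} to get $\|K_b-\hat K_b\|\le C|b-b_0|$, after which Lemma~\ref{lemasupreme} transfers part (ii). Where you diverge is in how (i) and (iii) are carried over. The paper re-runs the parametrix construction \eqref{sept3}--\eqref{sept8}, picking up the extra term $(K_b-\widetilde K_b)S_b(z)$ in $T_b(z)$, and for (iii) it again passes through the Riesz-projection operators $D_b$ so as to reduce a gap edge to a global $\inf$/$\sup$; you instead invoke the abstract self-adjoint facts that the Hausdorff distance between spectra and the displacement of a persisting gap edge are both bounded by $\|K_b-\hat K_b\|$. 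Both routes are sound; your version is more elementary and avoids redoing the resolvent estimate, while the paper's parametrix version has the side benefit of producing the explicit uniform resolvent bound \eqref{sept8} along the contour $L$, which is what it plugs directly into the Riesz integral. You correctly flag the two points that need care (the persistence of the frozen-family gap via (i), and obtaining a uniform Lipschitz constant on the compact interval $I$ by patching), and both are handled as you indicate.
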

\vspace{0.5cm}

\noindent {\it Continuous Schr\"odinger operators.} 
Let us consider the operator in $L^2(\R^2)$
\begin{align}\label{hamy}
H(b):=({\bf p}-b{\bf a})^2+V,\quad {\bf p}=-i\nabla_{\bf
  x},\quad{\bf a}({\bf x})=(-x_2/2,x_1/2 ),\quad b\in\R.
\end{align}
where we assume that the scalar potential $V$ is smooth and bounded
together with all its derivatives on $\R^2$. This very strong condition is
definitely not necessary for 
the result given
below, but it simplifies the presentation. For the same reason we
formulate the result only near $b_0=0$.
\begin{theorem}\label{teorema2}
Assume that the spectrum of $H(0)$ has a finite and isolated 
spectral band $\sigma_0$, where 
$\sigma_0=[s_-(0),s_+(0)]$. Then if $|b|$ is small enough, $\sigma_0$
will evolve into a still isolated spectral island $\sigma_b\subset
\sigma(H(b))$. Denote by $s_-(b):=\inf \sigma_b$ and
$s_+(b):=\sup\sigma_b$. Then these edges are Lipschitz at $b=0$,
i.e. there exists a constant
$C$ such that $|s_\pm(b)-s_\pm(0)|\leq C\; |b|$.  
\end{theorem}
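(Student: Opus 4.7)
My plan is to reduce Theorem \ref{teorema2} to the discrete setting of Corollary \ref{corolar1} by building a well-localized orthonormal basis for the spectral subspace of the island $\sigma_b$, so that, after the flux phase is peeled off, the restriction of $H(b)$ to that subspace becomes a Harper-like operator on $l^2(\Gamma)$ for a suitably chosen lattice $\Gamma\subset\R^2$.

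First I would establish stability of a contour around $\sigma_0$. Let $\mathcal{C}\subset\rho(H(0))$ be a simple positively oriented curve enclosing only $\sigma_0$ at distance at least $\epsilon>0$ from $\sigma(H(0))$. Since $V$ is smooth and bounded with all derivatives, the kernel $R_0(z;\x,\x')$ of $(z-H(0))^{-1}$ has exponential off-diagonal decay uniformly in $z\in\mathcal{C}$. Define the magnetically twisted resolvent $\widetilde R_b(z)$ by the kernel $e^{ib\varphi(\x,\x')}R_0(z;\x,\x')$. The standard magnetic perturbation identity, combined with the pointwise bound $|\varphi(\x-\y,\y-\x')|\le\tfrac12|\x-\y|\,|\y-\x'|$ from (\ref{iunie2}), gives $(z-H(b))\,\widetilde R_b(z)=\mathbf{1}+b\,T_b(z)$ with $\sup_{z\in\mathcal{C}}\|T_b(z)\|\le C$, so for $|b|$ small $\mathcal{C}\subset\rho(H(b))$ and the Riesz projection $P_b=(2\pi i)^{-1}\oint_\mathcal{C}(z-H(b))^{-1}dz$ is well defined; its range is the spectral subspace of $H(b)$ associated with $\sigma_b=\sigma(H(b))\cap\mathrm{int}(\mathcal{C})$.

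Next I would construct a Wannier-like basis of $\mathrm{Ran}(P_0)$. Because $P_0$ projects onto a finite isolated band of a smooth Schr\"odinger operator, its kernel decays faster than any inverse polynomial off the diagonal. Pick any $\Gamma\subset\R^2$ satisfying the embedding condition of the introduction and an orthonormal basis $\{\psi_\gamma\}_{\gamma\in\Gamma}$ of $\mathrm{Ran}(P_0)$ with each $\psi_\gamma$ rapidly localized near $\gamma$ (e.g.\ by Gram--Schmidt-symmetric orthonormalization of the vectors $P_0\chi_\gamma$ for a smooth partition of unity indexed by $\Gamma$). Set $\phi_\gamma^b(\x):=e^{ib\varphi(\x,\gamma)}\psi_\gamma(\x)$; step~1 implies $\|P_b\phi_\gamma^b-\phi_\gamma^b\|=O(b)$ uniformly in $\gamma$. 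A L\"owdin/Sz.-Nagy symmetric orthonormalization of $\{P_b\phi_\gamma^b\}$ then yields a genuine orthonormal basis $\{\chi_\gamma^b\}$ of $\mathrm{Ran}(P_b)$ with $\|\chi_\gamma^b-\phi_\gamma^b\|=O(b)$ and the same rapid localization, uniformly in $\gamma$ for small $|b|$.

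The matrix elements $\hat K_b(\gamma,\gamma'):=\langle\chi_\gamma^b,H(b)\chi_{\gamma'}^b\rangle$ then define a bounded self-adjoint operator on $l^2(\Gamma)$ whose spectrum equals $\sigma_b$, so $s_\pm(b)$ are its extremal spectral values. Repeated application of the flux additivity (\ref{iunie2}) lets one pull out a single global phase and write $\hat K_b(\gamma,\gamma')=e^{ib\varphi(\gamma,\gamma')}K(\gamma,\gamma';b)$, where $K(\gamma,\gamma';b)$ enjoys fast decay in $|\gamma-\gamma'|$ uniformly in $b$ and satisfies the Lipschitz-in-$b$ condition (\ref{aprilie1}); Corollary \ref{corolar1} applied to $\hat K_b$ (with $\alpha>4$) then yields the Lipschitz bound on $s_\pm$ at $b=0$. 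The main obstacle is precisely this last factorization: the various phases hidden inside $\chi_\gamma^b$ and inside $H(b)\chi_{\gamma'}^b$ only condense into the single factor $e^{ib\varphi(\gamma,\gamma')}$ after iterated use of (\ref{iunie2}), and the residual kernel must have enough polynomial moments, supplied by the rapid decay of the $\psi_\gamma$, to absorb the extra factors $|\x-\gamma|\,|\gamma-\gamma'|$ that appear when one Taylor-expands the leftover phases in $b$ to verify (\ref{aprilie1}).
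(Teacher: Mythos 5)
Your route diverges from the paper's after the first step (contour stability via the twisted resolvent $S_b$, which both arguments share). The paper's proof in \S 4 stays entirely in $L^2(\R^2)$: it introduces $K_\pm(b)=H(b)P_b+\lambda_\pm P_b$, uses \eqref{oprima6}--\eqref{oprima7} to obtain the $O(|b|)$ estimate \eqref{decem9} showing $K_\pm(b)$ is norm-close to the operator with integral kernel $e^{ib\varphi(\x,\x')}K_\pm(0)(\x,\x')$, applies Lemma \ref{lemasupreme}, and then simply observes that the heat-kernel argument of Theorem \ref{teorema1}(ii) --- in particular \eqref{ianu3}, \eqref{ianu4}, and \eqref{ianu5}--\eqref{ianu12} --- never used discreteness and runs verbatim with $\sum_{\x\in\Gamma}$ replaced by $\int_{\R^2}d\x$. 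You instead attempt to discretize onto a lattice $\Gamma$ via a localized orthonormal basis of $\mathrm{Ran}(P_0)$ and feed the resulting tight-binding matrix into Corollary \ref{corolar1}.

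The genuine gap is the existence of that basis. The theorem assumes only that $V$ is smooth and bounded together with its derivatives --- no periodicity. A uniformly rapidly localized orthonormal basis of $\mathrm{Ran}(P_0)$ indexed by a two-dimensional lattice is precisely the Wannier-basis problem, which even in the periodic case requires a topological triviality hypothesis in dimension two; in the non-periodic setting assumed here it is neither established nor in general true. Your parenthetical recipe (symmetric orthonormalization of $\{P_0\chi_\gamma\}$) does not supply one: those vectors need not be linearly independent, need not span $\mathrm{Ran}(P_0)$, and the Gram operator need not be bounded below --- indeed $\mathrm{Ran}(P_0)$ may well be finite-dimensional, in which case no orthonormal system indexed by an infinite $\Gamma$ exists. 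Beyond this, you yourself flag the final factorization $\hat K_b(\gamma,\gamma')=e^{ib\varphi(\gamma,\gamma')}K(\gamma,\gamma';b)$ satisfying \eqref{aprilie1} together with uniform $\mathcal{H}^\alpha$ decay as ``the main obstacle'' and leave it unresolved; this is not merely technical, since the L\"owdin correction carries its own $b$-dependence which must be shown Lipschitz in the weighted Schur--Holmgren norm and not merely in operator norm. The paper's continuum argument sidesteps both difficulties at once.
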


\vspace{0.5cm}

{\bf Remark}. We do not exclude the appearance of gaps
inside $\sigma_b$. Moreover, 
the formulation of this result is slightly different from the
one we gave in the discrete case. Here we look at the edges of a 
finite part of the spectrum,
and not at the edges of a gap. In the discrete case both
formulations are equivalent. However, our proof does not work in the
continuous case if $\sigma_0$ is infinite.

\subsection{Previous results and open problems}

Spectrum stability is a fundamental issue in perturbation
theory. It is well known that if $W$ is relatively bounded to
$H_0$, then the spectrum of $H_\lambda=H_0+\lambda W$ is at a Hausdorff distance
of order $|\lambda|$ from the spectrum of $H_0$. But this is in
general not true for perturbations which are not relatively
bounded. And the magnetic perturbation coming from a constant field 
is not relatively bounded, neither in the discrete nor in the 
continuous case.    

With the notable exception of a recent paper by Nenciu \cite{Nen3}, 
all previous results on the discrete case we are aware of deal with the
situation in which $\Gamma=\mathbb{Z}^2$ and the generating kernel obeys 
$K(\x,\x')=k(\x-\x')$, where $k$ is sufficiently fast decaying at infinity. 
Maybe the first proof of spectral stability of
Harper operators is due to Elliott \cite{Ell}. The result is
refined in \cite{BEY} where it is shown that the gap boundaries are
$\frac{1}{3}$-H\"older continuous in $b$. Later results by Avron, van
Mouche and Simon \cite{AMS}, Helffer and Sj\"ostrand \cite{He-Sj1,
  He-Sj2}, and Haagerup and R{\o}rdam \cite{HR} 
pushed the exponent up to $\frac{1}{2}$. In fact they prove
more, they show that the Hausdorff distance between spectra behaves like
$|b-b_0|^{\frac{1}{2}}$. These results are
optimal in the sense that the H\"older constant is independent of the
length of the eventual gaps, and it  is known that these gaps can close down
precisely like $|b-b_0|^{\frac{1}{2}}$ 
near rational values of $b_0$ \cite{He-Sj2, HKS}. Note that Nenciu
\cite{Nen3} proves a similar result for a much larger class of
Harper-like operators. Many other spectral properties of
Harper operators can be found in a paper by Herrmann and Janssen \cite{HJ}. 

In the continuous case, the stability of gaps was first shown by Avron
and Simon \cite{AS}, and Nenciu \cite{Nen2}. Nenciu's result implicitly
gives a $\frac{1}{2}$-H\"older continuity in $b$ for the Hausdorff
distance between spectra.  Then in \cite{BC} the H\"older exponent of
gap edges was pushed up to $\frac{2}{3}$.   

The first proof of Lipschitz continuity of gap edges 
for Harper-like operators 
was given by Bellissard \cite{Bell} (later on Kotani
\cite{Ko} extended his method to more general regular lattices and
dimensions larger than two). The configuration space is $\Gamma=\mathbb{Z}^2$ and the generating kernel is of the form 
$K(\x,\x')=k(\x-\x';b)$, where $k(\x;b)$ decays polynomially in  $|\x|$ and is allowed to depend smoothly on $b$. This extra-dependence is not central for our discussion, so we will consider that $k$ is $b$ independent. Bellissard's innovative idea uses in an essential way that
the Harper operators generated by translation invariant and fast
decaying kernels $k(\x-\x')$ can be written as linear
combinations of magnetic translations: 
$$K_b=\sum_{\gamma\in\mathbb{Z}^2}k(\gamma)W_b(\gamma),\quad
[W_b(\gamma)\psi](\x)=e^{ib\varphi(\x,\gamma)}\psi(\x-\gamma),\quad
W_b(\gamma)W_b(\gamma')=e^{ib\varphi(\gamma,\gamma')}W_b(\gamma+\gamma').$$   
Bellissard's crucial observation was that the $C^*$ algebra
$\mathcal{A}_{b_0+\delta}$ generated by
$\{W_{b_0+\delta}(\gamma)\}_{\gamma\in\mathbb{Z}^2} $ is isomorphic
with a sub-algebra of 
$\mathcal{A}_{b_0}\otimes \mathcal{A}_{\delta}$ which is generated by $
\{W_{b_0}(\gamma)\otimes
W_{\delta}(\gamma)\}_{\gamma\in\mathbb{Z}^2}$. Thus one can construct
an operator $\widetilde{K}_{b_0+\delta}$ which is isospectral with
$K_{b_0+\delta}$. The new operator lives in the space
$l^2(\mathbb{Z}^2)\otimes L^2(\R)$, and
$\widetilde{K}_{b_0}=K_{b_0}\otimes {\rm Id}$. It turns out that it is more convenient
to study the spectral edges of the new operator. The reason is that
the singularity induced by the magnetic perturbation is hidden in the
extra-dimension. But the proof breaks down in case of irregular lattices or if
the generating kernel $K(\x,\x')$ is not just a function of $\x-\x'$. 

Coming back to our proof, its crucial ingredient consists in expressing the
magnetic phases with the help of the heat kernel of 
{\it a continuous Schr\"odinger operator}, see
\eqref{ianu1}-\eqref{ianu4}. Moreover, the proof in the discrete case
also works for continuous kernels living on $\R^2$ and not just on
lattices.  This is what we use in the last step of the proof of Theorem
\ref{teorema2} dealing with continuous magnetic Schr\"odinger operators.

A limitation of our method consists in the fact that the
phases $\varphi(\x,\x')$ are generated by a constant magnetic field. A
 more general discrete problem was formulated by Nenciu in \cite{Nen3}
where he proposed to replace the explicit formulas in \eqref{iunie1}
and \eqref{iunie2} with more general real and antisymmetric phases obeying 
$\phi(\x,\x')=\overline{\phi(\x,\x')}=-\phi(\x',\x)$ and 
$$|\phi(\x,\y)+\phi(\y,\x')+\phi(\x',\x)|\leq {\rm
  area}\;\Delta(\x,\y,\x')$$
where $\Delta(\x,\y,\x')$ is the triangle generated by the three
points. These phases appear very naturally in the continuous case, see
\cite{CN, CN2, IMP, LMS, MP1, MP2, MP3, Nen1}, where it is shown that 
if ${\bf a}(\x)$
is the transverse gauge generated by a globally bounded 
magnetic field $|b(\x)|\leq 1$, then 
$\phi(\x,\x')$ can be chosen to be the path integral of 
${\bf a}(\x)$ on the segment linking $\x'$ with $\x$. This is the
same as the magnetic flux of $b$ through the triangle generated by $\x$, $\x'$
and the origin. 

Using a completely different proof method, Nenciu shows among other
things in \cite{Nen3} that the gap edges are Lipschitz up to a
logarithmic factor, and he conjectures that they are actually
Lipschitz. His method relies on the theory of almost convex
functions, and the result provided by this technique is optimal in the
sense that it cannot be improved in order to get rid of the
logarithm. A new idea would be necessary in order to prove Nenciu's
Lipschitz conjecture. 

Our current paper supports this conjecture because it provides examples of
phases not coming from a constant magnetic field which still generate 
Lipschitz gap edges. Let us show this here. 

Consider an irregular lattice $\Gamma\subset \R^2$ which is a 
local deformation of
$\mathbb{Z}^2$, that is there exists a bijective map
$F:\Gamma\to \mathbb{Z}^2$ such that $|F(\gamma)-\gamma|<\frac{1}{2}$. Define
the phases
$\widetilde{\varphi}(\x,\x'):=\varphi(F^{-1}(\x),F^{-1}(\x'))$ where
$\varphi$ is given by \eqref{iunie1}.

Choose any self-adjoint operator $K\in B(l^2(\mathbb{Z}^2))$ given by a
kernel $K(\x,\x')$
sufficiently fast decaying outside the diagonal. The same operator
can be seen in $B(l^2(\Gamma))$ given by
$\widetilde{K}(\gamma,\gamma'):=K(F(\gamma),F(\gamma'))$.  
Thus the operator $K_b$ generated by ${K}_b(\x,\x'):=e^{ib\widetilde{\varphi}(\x,\x')}K(\x,\x')$ is
unitary equivalent with an operator in $B(l^2(\Gamma))$ with a kernel 
$$\widetilde{K}_b(\gamma,\gamma'):=e^{ib\varphi(\gamma,\gamma')}\widetilde{K}(\gamma,\gamma').$$
In this case, we know from Theorem \ref{teorema1} 
that the edges of the spectral gaps of $\widetilde{K}_b$ and thus ${K}_b$ will
have a Lipschitz behavior. But the general case remains open.

\section{Proof of Theorem \ref{teorema1}}

This section is dedicated to the proof of our first theorem. Parts of
this proof will be later on adapted to the continuous case in Theorem
\ref{teorema2}.  

\subsection{Proof of {\it (i)}}

Let us start by showing the existence of natural embeddings of
$\mathcal{C}^\alpha$'s in $\mathcal{H}^\alpha$'s  given by the
following short lemma:
\begin{lemma}\label{lema1}
Let $H\in \mathcal{H}^\alpha$ with $\alpha>1$. Then $H\in \mathcal{C}^\beta$
with $\beta<\alpha-1$. In particular, if $\alpha>3$ then the kernel 
$\langle \x -\x'\rangle^{2}|H(\x,\x')|$ obeys a Schur-Holmgren
estimate and thus defines a bounded operator.
\end{lemma}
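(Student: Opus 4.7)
The plan is to deduce the $\mathcal{C}^\beta$ bound from the $\mathcal{H}^\alpha$ bound by a direct Cauchy--Schwarz argument, with the convergence of a geometric tail sum as the only nontrivial input.

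First I would split the weight as $\langle \x -\x'\rangle^\beta = \langle \x -\x'\rangle^{\beta-\alpha}\cdot \langle \x -\x'\rangle^{\alpha}$ and apply Cauchy--Schwarz to the $\x$-sum:
\begin{equation*}
\sum_{\x\in\Gamma}\langle \x -\x'\rangle^\beta |H(\x,\x')|
\leq \Bigl(\sum_{\x\in\Gamma}\langle \x -\x'\rangle^{-2(\alpha-\beta)}\Bigr)^{1/2}
\Bigl(\sum_{\x\in\Gamma}\langle \x -\x'\rangle^{2\alpha}|H(\x,\x')|^2\Bigr)^{1/2}.
\end{equation*}
The second factor is bounded by $\|H\|_{\mathcal{H}^\alpha}$, uniformly in $\x'$. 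So everything reduces to showing that the first factor is finite uniformly in $\x'\in\Gamma$ whenever $2(\alpha-\beta)>2$.

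For that I would use the hypothesis on $\Gamma$: since $F\colon\Gamma\hookrightarrow\mathbb{Z}^2$ is injective with $|F(\gamma)-\gamma|<1/2$, and the same holds for $\x'$, one has $|\x-\x'|\geq |F(\x)-F(\x')|-1$, which gives a comparison of the form $\langle \x-\x'\rangle\geq c\,\langle F(\x)-F(\x')\rangle$ for some absolute constant $c>0$ (or simply $\langle \x-\x'\rangle^{-1}\leq C\langle F(\x)-F(\x')\rangle^{-1}$ after treating the few near-diagonal terms separately). Then
\begin{equation*}
\sum_{\x\in\Gamma}\langle \x -\x'\rangle^{-2(\alpha-\beta)}
\leq C'\sum_{\n\in\mathbb{Z}^2}\langle \n \rangle^{-2(\alpha-\beta)},
\end{equation*}
which converges precisely when $\alpha-\beta>1$. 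This proves $H\in\mathcal{C}^\beta$ with norm controlled by $\|H\|_{\mathcal{H}^\alpha}$.

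The particular case $\alpha>3$ follows by choosing $\beta=2<\alpha-1$: the bound
\begin{equation*}
\sup_{\x'\in\Gamma}\sum_{\x\in\Gamma}\langle \x -\x'\rangle^{2}|H(\x,\x')|<\infty
\end{equation*}
is exactly the Schur--Holmgren estimate \eqref{novem1} applied to the (self-adjoint, since the weight is symmetric in $\x,\x'$) kernel $\langle \x-\x'\rangle^2|H(\x,\x')|$, so it defines a bounded operator. The only step with any content is the comparison between sums on the irregular lattice $\Gamma$ and on $\mathbb{Z}^2$, but the hypothesis $|F(\gamma)-\gamma|<1/2$ makes this routine; everything else is an application of Cauchy--Schwarz.
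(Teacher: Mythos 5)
Your proof is correct and follows essentially the same route as the paper: split the weight $\langle\x-\x'\rangle^\beta$ off from $\langle\x-\x'\rangle^\alpha$, apply Cauchy--Schwarz, and use convergence of $\sum_{\n\in\mathbb{Z}^2}\langle\n\rangle^{-2(\alpha-\beta)}$ when $\alpha-\beta>1$. The only difference is that you spell out the comparison between sums over $\Gamma$ and over $\mathbb{Z}^2$ via the injection $F$, a point the paper leaves implicit.
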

\begin{proof}Choose some small enough 
  $\epsilon>0$ such that $\alpha>\beta+1+\epsilon$. We write:
$$\langle \x -\x'\rangle^{\beta}|H(\x,\x')|\leq \langle \x
-\x'\rangle^{-1-\epsilon}\langle \x
-\x'\rangle^{\alpha}|H(\x,\x')|
$$
and see that the Cauchy-Schwarz inequality 
gives 
\begin{equation}\label{april1}
||H||_{ \mathcal{C}^\beta}\leq C_{\alpha,\beta}||H||_{\mathcal{H}^\alpha} \;. 
\end{equation}
\end{proof}

\vspace{0.5cm}

Another technical estimate to be proved in the
Appendix claims that if $H$ has a kernel which is localized
near the diagonal, then the resolvent's kernel will also have such a
localization. 
\begin{proposition}\label{prop1}
Let $H\in \mathcal{C}^\alpha$, with $\alpha > 0$. Let 
$z\in \rho(H)$. Then for every $0\leq \alpha'<\alpha$ we have $(H-z)^{-1}\in 
\mathcal{H}^{\alpha'}$, and there exists a constant $C$ 
independent of $z$ such that  
\begin{align}\label{april8}
||(H-z)^{-1}||_{\mathcal{H}^{\alpha'}}\leq C\;(1+||H||_{\mathcal{C}^\alpha}^{\alpha+1})\left (
\frac{1}
{\{{\rm dist}(z,\sigma(H))\}^{\alpha+2}}+\frac{1}{{\rm
    dist}(z,\sigma(H))}\right ).
\end{align}
\end{proposition}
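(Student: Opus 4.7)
The plan is to fix a reference site $x_0 \in \Gamma$ and to estimate the polynomial moments of $u := (H-z)^{-1}\delta_{x_0}$ via a commutator bootstrap of Combes--Thomas type adapted to polynomial weights. Since
\[
\|(H-z)^{-1}\|_{\mathcal{H}^{\alpha'}}^2 \;=\; \sup_{x_0 \in \Gamma} \sum_{x \in \Gamma} \langle x-x_0\rangle^{2\alpha'}|u(x)|^2,
\]
the task reduces to bounding $\|\omega^{(\alpha')} u\|_{\ell^2}$ uniformly in $x_0$, where $\omega^{(s)}(x) := \langle x-x_0\rangle^s$; the zeroth bound is simply $\|u\|_{\ell^2} \le 1/d$ with $d := \mathrm{dist}(z,\sigma(H))$.

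The central identity, obtained by multiplying $(H-z)u = \delta_{x_0}$ by $\omega^{(s)}$ and using $\omega^{(s)}(x_0)=1$, is
\[
\omega^{(s)} u \;=\; (H-z)^{-1}\delta_{x_0} \;-\; (H-z)^{-1}[H,\omega^{(s)}]\,u,
\]
so everything boils down to controlling $[H,\omega^{(s)}]u$, whose kernel is $(\omega^{(s)}(x)-\omega^{(s)}(y))H(x,y)$. A telescoping/mean-value bound gives $|\omega^{(s)}(x)-\omega^{(s)}(y)| \le C_s\langle x-y\rangle(\omega^{(s-1)}(x)+\omega^{(s-1)}(y))$ for $s\ge 1$, together with the H\"older variant $|\omega^{(s)}(x)-\omega^{(s)}(y)|\le C_s\langle x-y\rangle^s$ for $s\in[0,1]$. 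Combining this with Peetre's inequality $\langle x-x_0\rangle\le\sqrt{2}\,\langle x-y\rangle\langle y-x_0\rangle$ to rewrite the awkward $\omega^{(s-1)}(x)$ factor as $\langle x-y\rangle^{s-1}\omega^{(s-1)}(y)$, one dominates $|[H,\omega^{(s)}]u|$ pointwise by the integral operator with kernel $C_s\langle x-y\rangle^s|H(x,y)|$ applied to $\omega^{(s-1)} u$. Its Schur--Holmgren norm is bounded by $\|H\|_{\mathcal{C}^s}\le\|H\|_{\mathcal{C}^\alpha}$ as long as $s\le\alpha$, yielding
\[
\|[H,\omega^{(s)}]u\|_{\ell^2} \;\le\; C_s\,\|H\|_{\mathcal{C}^\alpha}\,\|\omega^{(s-1)} u\|_{\ell^2}.
\]

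Substituting back into the commutator identity and invoking $\|(H-z)^{-1}\|\le 1/d$ produces the recursion
\[
\|\omega^{(s)}u\|_{\ell^2} \;\le\; \frac{1}{d}+\frac{C_s\,\|H\|_{\mathcal{C}^\alpha}}{d}\,\|\omega^{(s-1)}u\|_{\ell^2},
\]
and iterating in unit steps from $s=0$ up to $s\le\alpha'$ (with the fractional residual handled by the H\"older variant) yields, after geometric summation, the claimed bound with $\|H\|_{\mathcal{C}^\alpha}^{\alpha+1}$ and $d^{-(\alpha+2)}$. Because $\omega^{(s)}$ is unbounded, the manipulations above should first be performed with the truncated weight $\omega^{(s)}_N:=\min(\omega^{(s)},N)$, a bounded multiplier whose commutator satisfies the same pointwise kernel estimates, with all bounds uniform in $N$; one then passes to $N\to\infty$ by monotone convergence of $\|\omega^{(s)}_N u\|_{\ell^2}$. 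I expect the main obstacle to be precisely the Peetre absorption step: a naive Cauchy--Schwarz on $\|\omega^{(s-1)}(x)\,(B|u|)(x)\|_{\ell^2}$, where $B$ has kernel $\langle x-y\rangle|H(x,y)|$, would force the much stronger constraint $2s+1\le\alpha$ at each iteration and thus fall short of the hypothesis $\alpha'<\alpha$; it is the Peetre trick that absorbs the left-hand weight into the Schur kernel and closes the induction under the sharp condition $s\le\alpha$.
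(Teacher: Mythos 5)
Your commutator--bootstrap argument is correct, and it is a genuinely different proof from the one in the paper. The paper works in the Fourier-dual picture: it introduces the isospectral family $H_\bk = U_\bk H U_\bk^*$ (with $U_\bk$ multiplication by $e^{i\bk\cdot\x}$), observes that $(H_\bk-z)^{-1}$ has kernel $e^{i\bk\cdot(\x-\x')}G(\x,\x';z)$, and then produces the polynomial weights $(\x-\x')^n G(\x,\x';z)$ by differentiating the resolvent $n$ times in $\bk$ at $\bk=0$, via the identity $D_\bk (H_\bk-z)^{-1} = -(H_\bk-z)^{-1}[D_\bk H_\bk](H_\bk-z)^{-1}$. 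The hypothesis $H\in\mathcal{C}^\alpha$ enters by guaranteeing $C^{\lfloor\alpha\rfloor}$ differentiability of $\bk\mapsto H_\bk$ together with $(\alpha-\lfloor\alpha\rfloor)$--H\"older continuity of the top derivative (Lemma \ref{lema10}); the fractional residual is then extracted through a Marchaud-type fractional integral
$\int_0^\infty k_1^{-(1+\alpha'-n)}\bigl[\partial_{k_1}^n(H_{k_1}-z)^{-1}-\partial_{k_1}^n(H_{k_1}-z)^{-1}|_{\bk=0}\bigr]\,dk_1$,
whose kernel displays the desired factor $|x_1-x_1'|^{\alpha'-n}(x_1-x_1')^n G$. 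Your proof instead stays entirely in physical space: you apply the weight $\omega^{(s)}(\x)=\langle\x-\x_0\rangle^s$ to the equation $(H-z)u=\delta_{\x_0}$, estimate the commutator kernel $(\omega^{(s)}(\y)-\omega^{(s)}(\x))H(\x,\y)$ by the mean-value bound plus Peetre, and close a unit-step recursion, with the fractional residual handled by the elementary H\"older bound $|a^s-b^s|\le|a-b|^s$ for $0<s\le 1$. The two routes buy different things: the paper's dual-action argument is aligned with the von Neumann algebra / Wiener-type-algebra point of view it wants to emphasize (cf.\ its Remark after the proposition and the references to Jaffard, Gr\"ochenig--Leinert, Takai--Takesaki), and the induction step is clean because the unitary conjugation never leaves $B(l^2(\Gamma))$; your argument is more elementary, does not need the $\bk$-dependent family, and your identification of the Peetre absorption as the step that closes the induction under the sharp constraint $s\le\alpha$ (rather than the lossy $2s+1\le\alpha$) is exactly the right diagnosis. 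The only rigor point is the unboundedness of $\omega^{(s)}$, which you already handle correctly by truncation and monotone convergence. Your final exponents ($\|H\|_{\mathcal{C}^\alpha}^{\lceil\alpha'\rceil+1}$, $d^{-(\lceil\alpha'\rceil+2)}$) are in fact a bit sharper than \eqref{april8}, which is harmless since the paper's bound is stated generously.
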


\noindent{\bf Remark}. This proposition is related to what specialists in von
Neumann algebras would call dual action, see \cite{Tak1, Tak2,
  Take}. Stronger localization results have been earlier obtained by
Jaffard \cite{Jaffa}, later generalized by Gr\"ochenig and Leinert
\cite{GL}.  We choose for completeness to give an elementary proof in the
Appendix; our proof also highlights the uniformity in $z\in
\rho(H)$. 

\vspace{0.5cm}

Now let us start the proof of {\it (i)}. Constants only depending
on $\epsilon$ will be named $C_\epsilon$ even though they might have
different values. 

Remember that it is enough to prove the stability result near
$b_0=0$. Let $K\in \mathcal{H}^\alpha$ with $\alpha>3$. Lemma
\ref{lema1} gives us some $\beta>2$ such that $K\in
\mathcal{C}^\beta$. Proposition
\ref{prop1} says that $(K-z)^{-1}\in \mathcal{H}^{\beta'}$ with some
$2<\beta'<\beta $, while Lemma
\ref{lema1} insures that there exists $\gamma >1$ such that 
$(K-z)^{-1}\in \mathcal{C}^\gamma$.

Denote by $G(\x,\x';z)$ the kernel
of $(K-z)^{-1}$. From \eqref{april8} and \eqref{april1}
we obtain a constant $C_\epsilon$ such that: 
\begin{equation}\label{sept3}
\sup_{\x'\in\Gamma}\sum_{\x\in \Gamma}\langle \x-\x'\rangle
|G(\x,\x';z)|\leq C_\epsilon\quad {\rm if}\quad {\rm
  dist}(z,\sigma(K))\geq \epsilon.
\end{equation}
Define the operator $S_{b}(z)$ to be the one corresponding to
the kernel $e^{ib\varphi(\x,\x')}G(\x,\x';z)$. Using
the Schur-Holmgren criterion we can write 
$$||S_{b}(z)||\leq C_\epsilon,\quad b\in\R,\quad {\rm
  dist}(z,\sigma(K))\geq \epsilon .$$

Using \eqref{iunie2}
we can write:
\begin{equation}\label{sept4}
(K_{b}-z)S_{b}(z)=:1+T_{b}(z),
\end{equation}
where $T_{b}(z)$ is given by the kernel
\begin{equation}\label{sept5}
e^{ib\varphi(\x,\x')}\sum_{\y\in\Gamma}(e^{ib\varphi(\x-\y,\x'-\y)}-1)K(\x,\y)\; 
G(\y,\x';z).
\end{equation}
Note that 
\begin{equation}\label{sept6}
|e^{ib\varphi(\x-\y,\x'-\y)}-1|\leq |b|\;|\varphi(\x-\y,\x'-\y)|\leq
\frac{|b|}{2}\; |\x-\y|\; |\y-\x'|.
\end{equation}
Then for any $f\in l^2(\Gamma)$ with compact support we can write:
 \begin{equation}\label{sept7}
|T_{b}(z)f|(\x)\leq |b|
\sum_{\y\in\Gamma}|\x-\y|\;|K(\x,\y)|\; 
|\y-\x'|\;|G(\y,\x';z)|\; |f(\x')|
\end{equation}
and after applying the Schur-Holmgren criterion we get: 
$$||T_b(z)||\leq |b|\; ||K||_{\mathcal{C}^1}
||(K-z)^{-1}||_{\mathcal{C}^1}\leq |b|\; C_\epsilon.$$
Thus if $|b|$ is small enough, $||T_b(z)||\leq 1/2$ whenever ${\rm
  dist}(z,\sigma(K))\geq \epsilon$. Now if Im$z\neq 0$ we know that 
$K_{b}-z$ is invertible, and from 
\eqref{sept4} we conclude that there exists a constant $C_\epsilon$ such that
\begin{align}\label{sept8}
(K_{b}-z)^{-1}&=S_{b}(z)\;(1+T_{b}(z))^{-1},\nonumber \\
||(K_{b}-z)^{-1}||&\leq C_\epsilon \quad {\rm whenever}\;
|b|\leq b_\epsilon \; {\rm and}\; {\rm
  dist}(z,\sigma(K))\geq \epsilon, 
\end{align}
uniformly in the imaginary part of $z$. 
This means that ${\rm
  dist}(z,\sigma(K_b))\geq \frac{1}{C_\epsilon}>0$ whenever 
$|b|\leq b_\epsilon$ and ${\rm
  dist}(z,\sigma(K))\geq \epsilon$, and the proof of {\it (i)} is over. \qed

\subsection{Proof of {\it (ii)}}
As before, we only need to consider $b_0=0$. We give the proof just
for the upper spectral limit $E_+$, since the argument for $E_-$ is
similar. 

\subsubsection{Reduction to localized operators}
We start with an abstract lemma. 
\begin{lemma}\label{lemasupreme}
Let $M(b)$ and $N(b)$ be two families of bounded and self-adjoint operators on some
Hilbert space $\mathcal{H}$, such that $||M(b)-N(b)||\leq C\; |b|$ if
$|b|\leq 1$. Then: 
\begin{align}\label{noapte1}
|\sup\sigma(M(b))-\sup\sigma(N(b))|
\leq ||M(b)-N(b)||\leq C\; |b|,\quad |b|\leq 1,
\end{align}
and a similar estimate holds for the infimum of their spectra. In
particular, if $\sup\sigma(N(b))$ is Lipschitz at $b=0$ then the same
is true for $\sup\sigma(M(b))$. 
\end{lemma}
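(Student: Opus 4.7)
The plan is to reduce the lemma to the variational characterization of the extremal points of the spectrum of a bounded self-adjoint operator. Specifically, for any bounded self-adjoint operator $A$ on $\mathcal{H}$ one has
\begin{equation*}
\sup\sigma(A)=\sup_{\|\psi\|=1}\langle\psi,A\psi\rangle,\qquad \inf\sigma(A)=\inf_{\|\psi\|=1}\langle\psi,A\psi\rangle.
\end{equation*}
I will quote this as a standard consequence of the spectral theorem.

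First I would establish \eqref{noapte1}. For any unit vector $\psi$, write
\begin{equation*}
\langle\psi,M(b)\psi\rangle=\langle\psi,N(b)\psi\rangle+\langle\psi,(M(b)-N(b))\psi\rangle\le \langle\psi,N(b)\psi\rangle+\|M(b)-N(b)\|.
\end{equation*}
Taking the supremum over unit $\psi$ and invoking the variational formula gives $\sup\sigma(M(b))\le \sup\sigma(N(b))+\|M(b)-N(b)\|$. Swapping the roles of $M$ and $N$ yields the reverse bound, which proves the first inequality in \eqref{noapte1}; the second is just the hypothesis. The statement for the infima is entirely analogous (one uses $-M$, $-N$, or the infimum version of the variational formula directly).

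Next I would deduce the propagation of the Lipschitz property. The key observation is that the hypothesis $\|M(b)-N(b)\|\le C|b|$ specialised to $b=0$ forces $M(0)=N(0)$, so in particular $\sup\sigma(M(0))=\sup\sigma(N(0))$. Assuming $N$ is Lipschitz at $0$, say $|\sup\sigma(N(b))-\sup\sigma(N(0))|\le L|b|$ for $|b|\le 1$, a triangle inequality combined with \eqref{noapte1} gives
\begin{equation*}
|\sup\sigma(M(b))-\sup\sigma(M(0))|\le |\sup\sigma(M(b))-\sup\sigma(N(b))|+|\sup\sigma(N(b))-\sup\sigma(N(0))|\le (C+L)|b|,
\end{equation*}
which is the desired conclusion.

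There is no real obstacle here; the only point worth being careful about is the variational characterization of the extremal spectral points, which requires self-adjointness (explicitly assumed) but is otherwise textbook. Everything else is a two-line manipulation, so the proof is essentially a packaging of the standard min-max/variational inequality.
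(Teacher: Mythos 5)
Your proof is correct and follows exactly the same route as the paper's: the variational characterization of $\sup\sigma$, symmetrizing to get $|\sup\sigma(M(b))-\sup\sigma(N(b))|\leq\|M(b)-N(b)\|$, and then the triangle inequality through $\sup\sigma(M(0))=\sup\sigma(N(0))$. Nothing to add.
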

\noindent{\bf Remark}. Note the important thing that we do not require
from $M(b)$ and $N(b)$ to converge in norm to $M(0)=N(0)$ when
$b$ tends to zero. 
\begin{proof}
For every $\psi \in \mathcal{H}$ with $||\psi||=1$ we can write
$$\langle M(b)\psi,\psi\rangle\leq \langle N(b)\psi,\psi\rangle +
||M(b)-N(b)||\leq
\sup\sigma(N(b))+||M(b)-N(B)||$$
which means that $\sup\sigma(M(b))-\sup\sigma(N(b))\leq ||M(b)-N(b)||$. By
interchanging $M(b)$ with $N(b)$ we obtain the inequality:
\begin{align}\label{sept9}
|\sup\sigma(M(b))-\sup\sigma(N(b))|\leq ||M(b)-N(b)||.
\end{align}
A similar argument shows the same estimate for the infimum of the
spectra. Regarding the Lipschitz property, we use that
$\sup\sigma(M(0))=\sup\sigma(N(0))$ and then we apply the triangle
inequality:
\begin{align}\label{septembrie1}
|\sup\sigma(M(b))-\sup\sigma(M(0))|\leq
|\sup\sigma(N(b))-\sup\sigma(N(0))|+||M(b)-N(b)||\leq C\; |b|.
\end{align}

\end{proof}

\vspace{0.5cm}

Getting back to our theorem, we now want to reduce the problem to operators with
kernels supported near the diagonal. Denote by $\chi$ the characteristic function of the interval
$[0,1]$. Denote by $\widehat{K}_b$ the operator given by the 
kernel $\widehat{K}_b(\x,\x'):=
\chi(\sqrt{b}|\x-\x'| ) K(\x,\x')$ and
by $\widetilde{K}_b$ the operator given by 
$\widetilde{K}_b(\x,\x'):=\chi(\sqrt{b}|\x-\x'| )
e^{ib\varphi(\x,\x')}K(\x,\x')$.

Since $K\in \mathcal{H}^\alpha$ with $\alpha>3$, according to Lemma
\ref{lema1} we have the bound:
\begin{align}\label{sept10}
\sup_{\x'\in\Gamma}\sum_{\x\in\Gamma}\langle \x-\x'\rangle^2
|K(\x,\x')|=||K||_{\mathcal{C}^2}<\infty. 
\end{align}

Via the Schur-Holmgren criterion we obtain: 
\begin{align}\label{sept11}
\max\{||K-\widehat{K}_b||,\;||K_b-\widetilde{K}_b||\}\leq 
\sup_{\x'\in\Gamma}\sum_{\x\in\Gamma}
\left [1-\chi(\sqrt{b}|\x-\x'| )\right ]
|K(\x,\x')|\leq |b|\; ||K||_{\mathcal{C}^2}. 
\end{align}
Using Lemma \ref{lemasupreme} for the pair $K$ and $\widehat{K}_b$ we
obtain $|E_+(0)-\sup(\sigma(\widehat{K}_b))|\leq |b|\;
||K||_{\mathcal{C}^2}$. The same lemma for the pair $K_b$ and
$\widetilde{K}_b$ gives $|E_+(b)-\sup(\sigma(\widetilde{K}_b))|\leq |b|\;
||K||_{\mathcal{C}^2}$. Then the triangle inequality leads to: 
\begin{align}\label{sept12}
|E_+(b)-E_+(0)|\leq 2|b|\; ||K||_{\mathcal{C}^2}+
|\sup(\sigma(\widetilde{K}_b))-\sup(\sigma(\widehat{K}_b))|. 
\end{align}
Thus we have reduced the problem to the study of the spectral edges of 
$\widetilde{K}_b$ and $\widehat{K}_b$.

\subsubsection{Study of the operators with cut-off}

Clearly, $\widetilde{K}_b(\x,\x')=
e^{ib\varphi(\x,\x')}\widehat{K}_b(\x,\x')$. Without loss, assume that
$b>0$. 
Take $\psi\in l^2(\Gamma)$ with compact support and compute 
(use \eqref{ianu3} in the second equality):
\begin{align}\label{ianu5}
&\langle \widetilde{K}_b\psi,\psi\rangle=\sum_{\x,\x'\in
  \Gamma}e^{ib\varphi(\x,\x')}\widehat{K}_b(\x,\x')\psi(\x')\overline{\psi(\x)}\nonumber
\\
&=
\int_{{\bf R}^2}d{\bf y}\sum_{\x,\x'\in
  \Gamma}\psi(\x')\overline{\psi(\x)}\frac{4\pi\sinh (2 bt)}{b}
\widehat{K}_b(\x,\x')\exp{\left [\frac{b|{\bf x}-{\bf x}'|^2}{4\tanh ( 2bt)}\right ]}
G_b(\x,\y;t)G_b(\y,\x';t).
\end{align}
Now denote by $A_b(t)$ the operator with kernel $$A_b(\x,\x';t):=
\widehat{K}_b(\x,\x')\exp{\left [\frac{b|{\bf x}-{\bf x}'|^2}{4\tanh (
      2bt)}\right ]}=K(\x,\x')\exp{\left [\frac{b|{\bf x}-{\bf
        x}'|^2}{4\tanh ( 2bt)}\right ]}\chi (\sqrt{b}|\x-\x'|).$$

The crucial observation is that equation \eqref{ianu5} leads to:
\begin{align}\label{ianu8}
&\langle \widetilde{K}_b\psi,\psi\rangle=
\int_{{\bf R}^2}d{\bf y}\langle A_b(t)  G_b(\y,\cdot;t)\psi, G_b(\y,\cdot;t)\psi\rangle 
\frac{4\pi\sinh (2 bt)}{b}\nonumber \\
&\leq \sup\sigma(A_b(t))\frac{4\pi\sinh (2 bt)}{b}
\int_{{\bf R}^2}d{\bf y}|| G_b(\y,\cdot;t)\psi||^2\nonumber \\
&=
\sup\sigma(A_b(t))\frac{4\pi\sinh (2 bt)}{b}\int_{{\bf R}^2}d{\bf
  y}\sum_{\x\in\Gamma} |G_b(\y,\x;t)|^2|\psi(\x)|^2\nonumber \\
&=\sup\sigma(A_b(t)) \; ||\psi||^2,
\end{align}
where in the last line we used \eqref{ianu4}. It means that
$\sup\sigma(\widetilde{K}_b)\leq \sup(\sigma(A_b(t)))$ for all $t$. Now
let us show that the operator
$A_b(t)-\widehat{K}_b$ has a norm proportional with $b$ if $t$ is large
enough (say $t=b^{-1}$). Indeed, we can write
\begin{align}\label{ianu9}
&|A_b(\x,\x';b^{-1})-\widehat{K}_b(\x,\x')|\leq |K(\x,\x')|\chi(
  \sqrt{b}  |\x-\x'|)
\left (\exp{\left [\frac{b|{\bf x}-{\bf x}'|^2}{4\tanh (2)}\right
    ]}-1\right )\nonumber \\
&\leq |K(\x,\x')|\chi(\sqrt{b}
    |\x-\x'|)\frac{b|{\bf x}-{\bf x}'|^2}{4\tanh (2)}
\exp{\left [\frac{b|{\bf x}-{\bf x}'|^2}{4\tanh ( 2)}\right
  ]}
\end{align}
and on the support of $\chi$ we can bound the above difference
with:
\begin{align}\label{ianu10}
&|A_b(\x,\x';b^{-1})-\widehat{K}_b(\x,\x')|\leq {\rm const}\; b\; 
|\x-\x'|^2|K(\x,\x')|.
\end{align}
The right hand side defines an operator whose norm behaves like $b$. 
Thus \eqref{ianu8} and \eqref{ianu10} imply:
\begin{align}\label{ianu11}
\sup\sigma(\widetilde{K}_b)\leq \sup\sigma(A_b(b^{-1}))\quad {\rm and}\quad 
||A_b(b^{-1})-\widehat{K}_b||\leq C\;b.
\end{align}
Using \eqref{sept9} for the pair $A_b(b^{-1})$ and $\widehat{K}_b$ we arrive at:
\begin{align}\label{ianu12}
\sup\sigma(\widetilde{K}_b)\leq
\sup\sigma(\widehat{K}_b)+ C\;b.
\end{align}

We now want to change places between $\widetilde{K}_b$ and $\widehat{K}_b$ in the above
inequality, which would lead to $\sup\sigma(\widehat{K}_b)\leq
\sup\sigma(\widetilde{K}_b)+C\;b$ and thus:
$$|\sup\sigma(\widetilde{K}_b)-
\sup\sigma(\widehat{K}_b)|\leq C\;b,$$
which together with \eqref{sept12} would imply:
$$|E_+(b)-E_+(0)|\leq C\; b,\quad b\geq 0.$$

The key step in the proof of \eqref{ianu12} was \eqref{ianu5}. Since 
$\widehat{K}_b(\x,\x')=
e^{-ib\varphi(\x,\x')} \widetilde{K}_b(\x,\x')$ we can
write (use \eqref{ianu3'} in the second line):
\begin{align}\label{ianu5'}
&\langle \widehat{K}_b\psi,\psi\rangle=\sum_{\x,\x'\in
  \Gamma}e^{-ib\varphi(\x,\x')}\widetilde{K}_b(\x,\x')\psi(\x')
\overline{\psi(\x)}\nonumber
\\
&=
\int_{{\bf R}^2}d{\bf y}\sum_{\x,\x'\in
  \Gamma}\psi(\x')\overline{\psi(\x)}\frac{4\pi\sinh (2 bt)}{b}
\widetilde{K}_b(\x,\x')\exp{\left [\frac{b|{\bf x}-{\bf x}'|^2}{4\tanh ( 2bt)}\right ]}
G_b(\x',\y;t)G_b(\y,\x;t).
\end{align}
Now everything will work as before, because the phase
$e^{ib\varphi(\x,\x')}$ changes neither the localization nor the
$\mathcal{C}^2$ norm of the operators. The proof for the upper
spectral edges is over. 

The proof for the lower spectral edges is based on an estimate which
is very similar with \eqref{ianu8}, in which we reverse the inequality
and show that $\inf\sigma(\widetilde{K}_b)\geq \inf\sigma(A_b(t))$ for
all $t$. We give no further details.

\subsection{Proof of {\it (iii)}}

The idea is to reduce the problem to the previous case. Again it is
enough to consider $b_0=0$ and $b>0$ small enough. Assume that $K$ has
a gap in its spectrum of the form $(e_-,e_+)$, with $e_\pm\in
\sigma(K)$. Then due to {\it (i)} we know that if $b$ is small enough
the gap will survive: we can choose a positively oriented 
circle $L$ in the complex plane
containing  $\Sigma_+(b):=\sigma(K_b)\cap [e_+(b),\infty)$ such that 
$${\rm dist}(z,\sigma(K_b))\geq \eta>0 \quad {\rm whenever} \quad z\in L \quad
{\rm and}\quad 0<b<b_\eta. $$

The orthogonal projector $P_b$ corresponding to $\Sigma_+(b)$ can be written
as a Riesz integral and we have:

\begin{align}\label{septe1}
P_b:=\frac{i}{2\pi}\int_L(K_b-z)^{-1}dz,\quad
K_bP_b=\frac{i}{2\pi}\int_Lz(K_b-z)^{-1}dz,\quad b\geq 0.
\end{align}
If we consider $K_bP_b$ as an operator living on the whole space
$l^2(\Gamma)$, then its spectrum is given by the union $\{0\}\cup
\Sigma_+(b)$. If we choose $\lambda:=1+\sup\sigma(K)$, then for $b$ small
enough the operator 
$ D_b:=K_bP_b-\lambda P_b$ will have $\inf \sigma (D_b)=
e_+(b)-\lambda\leq -1/2$. Thus $e_+(b)=\lambda+\inf\sigma(D_b)$, hence
$e_+(b)$ is Lipschitz at $b=0$ if $\inf\sigma(D_b)$ has the same
property. This is what we prove next:
\begin{lemma}\label{lemma13}
Let $D_b=K_bP_b-\lambda P_b$ with $\lambda:=1+\sup\sigma(K)$. Then
there exists $b_1>0$ small enough and a constant $C>0$ such that for
every 
$0< b<b_1$ we have $ |\inf\sigma(D_b)-\inf\sigma(D_0)|\leq C\; b$. 
\end{lemma}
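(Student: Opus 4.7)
The plan is to reduce the statement to Theorem~\ref{teorema1}\,\textit{(ii)} by comparing $D_b$ to a genuine Harper-like companion. Define $\widetilde{D}_b$ as the operator with kernel $e^{ib\varphi(\x,\x')}D_0(\x,\x')$, so that $\{\widetilde{D}_b\}_{b\in\R}$ is the Harper-like family generated by the fixed kernel $D_0=(K-\lambda)P_0$, with $\widetilde{D}_0=D_0$. If one can establish both \textbf{(a)} $\|D_b-\widetilde{D}_b\|\leq C\,b$ for small $b>0$ and \textbf{(b)} that $D_0$ is localized enough for \textit{(ii)} to apply to $\widetilde{D}_b$, then Lemma~\ref{lemasupreme} combined with the triangle inequality yields the desired Lipschitz bound on $\inf\sigma(D_b)$.

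Claim \textbf{(b)} is the easier step, and this is where the extra degree of regularity $\alpha>4$ (as opposed to $\alpha>3$) in the hypothesis of \textit{(iii)} enters. Since $K\in\mathcal{H}^\alpha$ with $\alpha>4$, Lemma~\ref{lema1} gives $K\in\mathcal{C}^{3+\delta}$ for some $\delta>0$, and Proposition~\ref{prop1} yields $(K-z)^{-1}\in\mathcal{H}^{\alpha'}$ for some $\alpha'>3$, uniformly for $z\in L$. A second application of Lemma~\ref{lema1} promotes this to $(K-z)^{-1}\in\mathcal{C}^2$ uniformly on $L$. Integrating along $L$ gives $P_0\in\mathcal{C}^2$, and the Leibniz-type bound $\langle\x-\x'\rangle^2\leq 2(\langle\x-\y\rangle^2+\langle\y-\x'\rangle^2)$ together with the Schur--Holmgren product rule shows $D_0=(K-\lambda)P_0\in\mathcal{C}^2$. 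This $\mathcal{C}^2$ bound is exactly what the proof of \textit{(ii)} actually uses on the generating kernel (see~\eqref{sept10}--\eqref{sept11} and~\eqref{ianu10}), so rerunning that argument verbatim for $\widetilde{D}_b$ produces $|\inf\sigma(\widetilde{D}_b)-\inf\sigma(D_0)|\leq C\,b$.

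The main obstacle is claim \textbf{(a)}. I would write $P_b=\frac{i}{2\pi}\int_L(K_b-z)^{-1}dz$ and use the factorization $(K_b-z)^{-1}=S_b(z)(1+T_b(z))^{-1}$ from the proof of \textit{(i)}, where $S_b(z)$ has kernel $e^{ib\varphi(\x,\x')}G(\x,\x';z)$ and $\|T_b(z)\|\leq C|b|$ uniformly on $L$. Integrating the identity $(K_b-z)^{-1}-S_b(z)=-S_b(z)T_b(z)(1+T_b(z))^{-1}$ over $L$ gives $\|P_b-\widetilde{P}_b\|\leq C\,b$, where $\widetilde{P}_b$ is the operator with kernel $e^{ib\varphi(\x,\x')}P_0(\x,\x')$. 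I would then decompose
\begin{equation*}
D_b-\widetilde{D}_b=(K_b-\lambda)(P_b-\widetilde{P}_b)+\bigl(K_b\widetilde{P}_b-\mathcal{M}_b\bigr),
\end{equation*}
where $\mathcal{M}_b$ is the operator whose kernel equals $e^{ib\varphi(\x,\x')}(KP_0)(\x,\x')$. The first summand is $O(b)$ in norm from the bound just established and $\|K_b\|\leq\|K\|_{\mathcal{C}^0}$. For the second, the gauge identity~\eqref{iunie2} rewrites the kernel of $K_b\widetilde{P}_b$ as $e^{ib\varphi(\x,\x')}\sum_{\y}e^{ib\varphi(\x-\y,\y-\x')}K(\x,\y)P_0(\y,\x')$, so subtracting the kernel of $\mathcal{M}_b$ leaves a residual bounded pointwise by $(|b|/2)\sum_{\y}|\x-\y|\,|K(\x,\y)|\,|\y-\x'|\,|P_0(\y,\x')|$ via~\eqref{sept6}; Schur--Holmgren with $K,P_0\in\mathcal{C}^1$ bounds this residual operator by $C\,b$, closing \textbf{(a)} and hence the lemma. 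The real difficulty is this norm comparison: one must carefully track how the magnetic gauge propagates through both the Riesz integral defining $P_b$ and the operator product $K_bP_b$, and check that each deviation from the frozen kernel $e^{ib\varphi}D_0$ is absorbed by the available $\mathcal{C}^1$ bounds on $K$ and $P_0$.
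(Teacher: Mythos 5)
Your proof is correct and follows the same high-level strategy as the paper: build a genuine Harper-like companion $\widetilde{D}_b=(D_0)_b$ with kernel $e^{ib\varphi}D_0$, show $\|D_b-\widetilde{D}_b\|\leq C\,b$, and then invoke Lemma~\ref{lemasupreme} and part \textit{(ii)}. However, the details differ from the paper in two places, and in both the paper's route is shorter. First, for the localization of $D_0$: you go through $P_0\in\mathcal{C}^2$ and a product estimate for $(K-\lambda)P_0$, then argue that $\mathcal{C}^2$ is all the proof of \textit{(ii)} really uses. The paper instead applies Proposition~\ref{prop1} once more (this is where $\alpha>4$ enters) to get $(K_b-z)^{-1}\in\mathcal{H}^{\beta'}$ with $\beta'>3$ uniformly on $L$, whence $D_0\in\mathcal{H}^{\beta'}$, so Theorem~\ref{teorema1}\,\textit{(ii)} applies literally as stated, with no need to inspect its proof. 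Second and more substantively, for the norm comparison the paper never treats $D_b$ as the operator product $(K_b-\lambda)P_b$: it keeps the Riesz-integral representation $D_b=\frac{i}{2\pi}\int_L(z-\lambda)(K_b-z)^{-1}dz$, notices that $(D_0)_b=\frac{i}{2\pi}\int_L(z-\lambda)S_b(z)dz$ because multiplication by $e^{ib\varphi}$ commutes with the contour integral, and bounds the difference in one step by $\sup_{z\in L}\|(K_b-z)^{-1}-S_b(z)\|\leq C\,b$. Your two-term decomposition $(K_b-\lambda)(P_b-\widetilde{P}_b)+(K_b\widetilde{P}_b-\mathcal{M}_b)$ is algebraically correct and each term is indeed $O(b)$, but the second term requires you to re-deploy the gauge identity~\eqref{iunie2} at the kernel level, which is exactly the complication the paper avoids by working at the resolvent level. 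In short: you essentially re-derive \eqref{septe3} by integrating $(K_b-z)^{-1}-S_b(z)$ over $L$, but then multiply by $(K_b-\lambda)$ afterward instead of folding the weight $(z-\lambda)$ into the same contour integral; the latter gives the estimate for free.
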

\begin{proof}
Remember that we imposed $\alpha>4$. We have that 
$||K_b||_{\mathcal{H}^\alpha}=||K||_{\mathcal{H}^\alpha}<\infty$ for
all $b$. According to Lemma \ref{lema1}, there exists $\beta>3$ such
that
$||K_b||_{\mathcal{C}^{\beta}}=||K||_{\mathcal{C}^{\beta}}<\infty$. Then
if $b$ is smaller than some constant only depending on $L$, Proposition
\ref{prop1} tells us that $(K_b-z)^{-1}\in {\mathcal{H}^{\beta'}}$ for some
$3<\beta'<\beta$, for all
$z\in L$ and 
$\sup_{z\in L}||(K_b-z)^{-1}||_{\mathcal{H}^{\beta'}}\leq C.$ Thus
both $P_b$ and $D_b$ belong to $\mathcal{H}^{\beta'}$ with $\beta'>3$ 
if $b$ is small
  enough. More
  precisely, there exists $b_2>0$ sufficiently small such that 
\begin{align}\label{septe2}
\max\{||P_b||_{\mathcal{H}^{\beta'}},||D_b||_{\mathcal{H}^{\beta'}}\}\leq C,\quad
0\leq b\leq b_2.
\end{align}
If $G(\x,\x';z)$ is the integral kernel of $(K-z)^{-1}$, then we
introduced at point {\it (i)} the operator $S_b(z)$ given by the
kernel $e^{ib\varphi(\x,\x')}G(\x,\x';z)$. Using \eqref{sept8} 
we can write: 
\begin{align}\label{septe3}
\sup_{z\in L}||(K_b-z)^{-1}-S_b(z)||\leq C\;b,
\end{align}
provided $b$ is small enough. Denoting by $D_0$ the operator given by
the integral kernel
$$D_0(\x,\x'):=\frac{i}{2\pi}\int_L(z-\lambda)G(\x,\x';z)dz$$
and by $(D_0)_b$ the operator generated by $e^{ib\varphi(\x,\x')}D_0(\x,\x')$,
then using \eqref{septe3} we arrive at the estimate:
\begin{align}\label{septe4}
||D_b-(D_0)_b||\leq C\;b\quad {\rm whenever}\quad 0\leq b<b_2.
\end{align}
It follows from Lemma \ref{lemasupreme} that $\inf\sigma(D_b)$ is Lipschitz at $b=0$ if 
$\inf\sigma((D_0)_b)$ has the same property. But for the operator
$(D_0)_b$ we can apply point {\it (ii)}, and the proof is over.
\end{proof}

\section{Proof of Corollary \ref{corolar1}}

In order to keep the notation simple, we will only consider
$b_0=0$. Here the generating kernel $K(\x,\x';b)$ depends on $b$ and
\eqref{aprilie1} at $b_0=0$ reads as:
\begin{equation}\label{septembrie2}
\sup_{\x'\in\Gamma}\sum_{\x\in \Gamma}|K(\x,\x';b)-K(\x,\x';0)|\leq
C\; |b|,\quad |b|\leq 1.
\end{equation}
Let us introduce the family $\widetilde{K}_b$ where their kernels are
given by $e^{ib\varphi(\x,\x')}K(\x,\x';0)$. Clearly,
$K_0=\widetilde{K}_0$. Moreover, \eqref{septembrie2} implies that
 $||K_b-\widetilde{K}_b||\leq C\;|b|$ around $b=0$. We know that Theorem
 \ref{teorema1} {\it (ii)} applies for
$\widetilde{K}_b$ around $b=0$, so the only thing we have left is to
extend it to $K_b$. From Lemma \ref{lemasupreme} we immediately conclude that
$\sup \sigma(K_b)$ and $\inf \sigma(K_b)$ are Lipschitz at $b=0$. 

The spectral stability of $K_b$ can be shown with the same strategy as
the one one used in \eqref{sept3}-\eqref{sept8}. The operator
$S_b(z)$ must be constructed starting from the kernel of
$(K_0-z)^{-1}$ which gets multiplied with the phase
$e^{ib\varphi(\x,\x')}$. When we act with $K_b-z$ on $S_b(z)$ as in
\eqref{sept4}, we obtain an extra term which enters in $T_b(z)$, which
is $(K_b-\widetilde{K}_b)S_b(z)$. This error is again proportional with
$|b|$ if $z$ is at some distance from the spectrum of $K_0$. Thus
\eqref{sept8} holds again. 

For the case of gaps, the proof is identical with the case independent
of $b$.  
  \qed

\section{Proof of Theorem \ref{teorema2}}

There are important similarities between the proof strategies in the discrete and continuous
cases. Although the stability of the resolvent set of $H(b)$ is known,
we will sketch a short proof which will also provide some ingredients
for the proof of the Lipschitz behavior of the band edges.

\subsection{Stability of gaps}
Assume that
$M\subset \rho(H(0))$ is a compact set and ${\rm dist}(M,\sigma(H(0))) >0$. The resolvent $(H(0)-z)^{-1}$ is an integral
operator given by an integral kernel $Q_0(\x,\x';z)$. 

The singularities of $Q_0(\x,\x';z)$ are the same as in
the case of the free Laplacean and there exists
some $\delta>0$ and $C_M<\infty$ such that uniformly in $\x\neq \x'$
\cite{GK, CN3}: 
  \begin{align}\label{decem2}
&\sup_{z\in M}|Q_0(\x,\x';z)|\leq C_M\;
(1+|\ln(|\x-\x'|)|)e^{-\delta|\x-\x'|},\nonumber \\ 
&\sup_{z\in M}|\nabla_\x Q_0(\x,\x';z)|\leq C_M\left (1+
  \frac{1}{|\x-\x'|}\right )e^{-\delta|\x-\x'|}. 
\end{align}
In particular we have the following Schur-Holmgren type property:
\begin{equation}\label{decem1}
\sup_{z\in M}\sup_{\x'\in\R^2}\int_{\R^2}|Q_0(\x,\x';z)|d\x \leq C(M)<\infty.
\end{equation}
This allows us to define for every $z\in M$ a bounded operator
$S_b(z)$ whose integral kernel is given by:
\begin{equation}\label{oprima40}
S_b(\x,\x';z):=e^{ib\varphi(\x,\x')}Q_0(\x,\x';z),\quad \sup_{z\in
  M}||S_b(z)||\leq C(M)<\infty.
\end{equation}

Define $T_b(z)$ to be the operator with the integral kernel:
\begin{equation}\label{oprima41}
T_b(\x,\x';z):=be^{ib\varphi(\x,\x')}\{2i {\bf a}(\x-\x')\nabla_\x Q_0(\x,\x';z)+b|{\bf a}(\x-\x')|^2         Q_0(\x,\x';z)\}.
\end{equation}
The kernel $T_b(\x,\x';z)$ is bounded because the inequality $|{\bf a}(\x-\x')|\leq |\x-\x'|$ 
compensates 
the local singularities of $\nabla_\x Q_0(\x,\x';z)$ and
$Q_0(\x,\x';z)$ when $|\x-\x'|$ is small,
while when $|\x-\x'|$ is large we have the exponential decay which
comes into play. In fact, using
\eqref{decem2} we see that the kernel $T_b(\x,\x';z)$ obeys a
Schur-Holmgren estimate. We get:
\begin{equation}\label{oprima7}
 \sup_{z\in
  M}||T_b(z)||\leq C(M) \;|b|,\quad |b|\leq 1.
\end{equation}

Note the important identity valid on Schwartz functions:
\begin{equation}\label{oprima4}
\{-i\nabla_{\x}-b{\bf
  a}(\x)\}e^{ib\varphi(\x,\x')}=e^{ib\phi(\x,\x')}\{-i\nabla_{\x}-b{\bf
  a}(\x-\x')\}.
\end{equation}
Let us note that $S_b(z)$ leaves the Schwartz space invariant and for
such two functions $f$ and $g$  
we have (using \eqref{oprima4}):
\begin{align}\label{oprima5}
&\langle \{({\bf p}-b{\bf a})^2+V-z\}S_b(z)f,g\rangle \\
&=\int_{\R^2}\int_{\R^2}e^{ib\varphi(\x,\x')}(\{-i\nabla_\x-b{\bf a}(\x-\x')\}^2+V(\x)-z)
Q_0(\x,\x';z)f(\x')\overline {g(\x)}d\x d\x'\nonumber \\
&=\langle f,g\rangle +\langle T_b(z)f,g\rangle.\nonumber 
\end{align}
The operator $H(b)$ is essentially self-adjoint on the Schwartz space,
and after a density argument we conclude that the range of $S_b(z)$ is
contained in the domain of $H(b)$ and $(H(b)-z)S_b(z)=1+T_b(z)$. Now
there exists $b_1>0$ small enough such that if $|b|\leq b_1$ we have $\sup_{z\in
  M}||T_b(z)||\leq 1/2$ (see \eqref{oprima7}). Then after a standard argument we conclude
\begin{equation}\label{oprima6}
(H(b)-z)^{-1}=S_b(z)(1+T_b(z))^{-1}, \quad \sup_{z\in
  M}||(H(b)-z)^{-1}||\leq C_M,\quad |b|\leq b_1.
\end{equation}
This means that the gaps in the spectrum of $H(0)$ are
preserved. In particular,
for every $\epsilon>0$ there exists $b_2(\epsilon)>0$ such that:
\begin{equation}\label{oprima60}
s_-(0)-\epsilon \leq s_-(b)\leq s_+(b)\leq s_+(0)+\epsilon\quad {\rm
  whenever }\quad |b|\leq b_2(\epsilon).
\end{equation}
Choose a positively oriented circle
$L$ isolated from $\sigma(H(0))$ such that $L$ completely contains 
the finite band
$\sigma_0$. Then if $|b|$ is small enough $L$ will completely contain
$\sigma_b$ and remain separated from $\sigma(H(b))$. 

\subsection{The reduction to Harper-like operators}
As in the discrete case, we construct the Riesz integrals
$$P_b:=\frac{i}{2\pi}\int_L (H(b)-z)^{-1}dz,\quad K(b):=H(b)P_b=
\frac{i}{2\pi}\int_L z(H(b)-z)^{-1}dz.$$
The operator $H(b)P_b$ seen in the whole space $L^2(\R^2)$ will have the
spectrum $\sigma_b\cup \{0\}$. Fix $\lambda_+:=1-s_-(0)$. If
$|b|\leq b_2(\frac{1}{2})$ (see \eqref{oprima60}), then we know that 
$s_+(b)+\lambda_+\geq s_-(b)+\lambda_+ \geq
\frac{1}{2}>0$. It means that 
$$s_+(b)+\lambda_+=\sup\sigma\{H(b)P_b+\lambda_+ P_b\}.$$
 Similarly,
choosing $\lambda_-:=-1-s_+(0)$ we have $s_-(b)+\lambda_-<-\frac{1}{2}<0$
hence 
$$s_-(b)+\lambda_-=\inf\sigma\{H(b)P_b+\lambda_- P_b\}.$$  
In other words, the band edges $s_\pm$ will be Lipschitz at $b=0$ if 
the spectral
edges of the
operators
$$K_\pm(b):= H(b)P_b+\lambda_\pm P_b=\frac{i}{2\pi}\int_L (z+\lambda_\pm)(H(b)-z)^{-1}dz$$
have the same property. Note that the operator $K_\pm(0)$ has an
integral kernel given by:
\begin{align}\label{decem10}
K_\pm(0)(\x,\x')=\frac{i}{2\pi}\int_L
(z+\lambda_\pm)Q_0(\x,\x';z)dz,\quad |K_\pm(0)(\x,\x')|\leq
Ce^{-\delta |\x-\x'|},
\end{align}
where the local singularity at $\x=\x'$ dissapears due to the
integral with respect to $z$.

Now using \eqref{oprima6}, \eqref{oprima40}
and \eqref{oprima7} we have:
\begin{align}\label{decem9}
\left \Vert K_\pm(b)-\frac{i}{2\pi}\int_L
  (z+\lambda_\pm)S_b(z)dz\right \Vert \leq C\; |b|.
\end{align}
According to Lemma \ref{lemasupreme}, the spectral edges of 
$K_\pm(b)$ are Lipschitz at $b=0$ if the same
property is true for 
$$\left (K_\pm(0)\right )_b:=\frac{i}{2\pi}\int_L
  (z+\lambda_\pm)S_b(z)dz.$$
This notation wants to highlight the fact that $\left (K_\pm(0)\right )_b$
is given by the integral kernel
$$\left (K_\pm(0)\right
)_b(\x,\x'):=e^{ib\varphi(\x,\x')}K_\pm(0)(\x,\x').$$
At this point we are in a situation which is completely similar to the
discrete case, with the difference that the Hilbert space is
$L^2(\R^2)$ and the sums over $\Gamma$ have to be replaced by
integrals. The unperturbed kernel $K_\pm(0)(\x,\x')$ has an
exponential localization. 

We can mimic the proof of Theorem \ref{teorema1} {\it (ii)} and
conclude that the spectral edges of  $\left (K_\pm(0)\right )_b$ are
Lipschitz at $b=0$, and we are done. \qed

\section{Appendix}

\subsection{ Proof of Proposition \ref{prop1}}

 Denote by
$G(\x,\x';z)$ the integral kernel of $(H-z)^{-1}$. If $\alpha'=0$ we have 

$$\sum_{\x\in
    \Gamma}|G(\x,\x';z)|^2=||(H-z)^{-1}\delta_{\x'}||^2\leq \frac{1}
{\{{\rm dist}(z,\sigma(H))\}^{2}}$$ uniformly in $\x'$, an estimate
which is in fact much better than \eqref{april8}. So from now on we
may assume that $0<\alpha'<\alpha$.  

For $\bk\in \R^2$ define the unitary multiplication operator $U_\bk$
by $(U_\bk f)(\x)=e^{i\bk\cdot \x}f(\x)$. Define the family of 
isospectral operators $H_\bk=U_\bk HU_\bk^* $, with integral
kernels given by $H_\bk(\x,\x')=e^{i\bk\cdot
  (\x-\x')}H(\x,\x')$. We need the following technical result:
\begin{lemma}\label{lema10}
Let $H$ be an element of $ \mathcal{C}^\alpha$. Let $n$ be the integer
part of $\alpha$. Then the mapping  
$$\R^2\ni \bk\mapsto H_\bk\in B(l^2(\Gamma))$$
is $n$ times continuously differentiable in the norm topology. 
Moreover, any $n$'th
order mixed partial derivative of $H_\bk$ is $\alpha-n$ H\"older
continuous at $\bk=0$ in the norm topology. 
\end{lemma}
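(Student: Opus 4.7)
The plan is to compute the formal $\bk$-derivatives of the kernel explicitly, recognize them as Schur--Holmgren-bounded operators thanks to $H\in\mathcal{C}^\alpha$, and then upgrade pointwise kernel approximations into operator-norm estimates via the Schur--Holmgren criterion \eqref{novem1}.

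\textbf{Step 1: Candidate derivatives.} Differentiating the kernel $H_\bk(\x,\x')=e^{i\bk\cdot(\x-\x')}H(\x,\x')$ formally with respect to $\bk$ brings down factors of $i(\x-\x')$. For a multi-index $\boldsymbol{\mu}\in\mathbb{N}^2$ with $|\boldsymbol{\mu}|\leq n$, let $D^{\boldsymbol{\mu}}H_\bk$ denote the operator whose kernel is
$$[i(\x-\x')]^{\boldsymbol{\mu}}\, e^{i\bk\cdot(\x-\x')}H(\x,\x').$$
Since $|\x-\x'|^{|\boldsymbol{\mu}|}\leq \langle\x-\x'\rangle^\alpha$ for $|\boldsymbol{\mu}|\leq n\leq \alpha$, the Schur--Holmgren bound gives $\|D^{\boldsymbol{\mu}}H_\bk\|\leq \|H\|_{\mathcal{C}^\alpha}$ uniformly in $\bk$.

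\textbf{Step 2: Norm-differentiability.} I will show by induction on $|\boldsymbol{\mu}|\leq n$ that $D^{\boldsymbol{\mu}}H_\bk$ really is the mixed partial derivative of $H_\bk$ in the operator-norm topology. The inductive step reduces to estimating, for a unit vector $\bk_\ell$,
$$R_\bk(\x,\x';h) := [i(\x-\x')]^{\boldsymbol{\mu}}\, e^{i\bk\cdot(\x-\x')}\,\Bigl(\tfrac{e^{ih(x_\ell-x_\ell')}-1}{h}-i(x_\ell-x_\ell')\Bigr)H(\x,\x')$$
in Schur--Holmgren norm. The bracket obeys both bounds $\leq\tfrac{1}{2}|h|\,(x_\ell-x_\ell')^2$ and $\leq 2|x_\ell-x_\ell'|$, so by interpolation $\leq C|h|^{\theta}|\x-\x'|^{1+\theta}$ for any $\theta\in[0,1]$; taking $\theta$ small enough that $|\boldsymbol{\mu}|+1+\theta\leq\alpha$ (possible because $|\boldsymbol{\mu}|+1\leq n\leq\alpha$) lets us dominate $|R_\bk|$ by $C|h|^{\theta}\langle\x-\x'\rangle^\alpha|H(\x,\x')|$. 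Summing in $\x$ and applying Schur--Holmgren yields $\|R_\bk(\cdot,\cdot;h)\|\to 0$ as $h\to 0$, establishing differentiability.

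\textbf{Step 3: H\"older estimate at the top order.} For $|\boldsymbol{\mu}|=n$, the difference $D^{\boldsymbol{\mu}}H_\bk-D^{\boldsymbol{\mu}}H_0$ has kernel
$$[i(\x-\x')]^{\boldsymbol{\mu}}\,(e^{i\bk\cdot(\x-\x')}-1)\,H(\x,\x').$$
With $\beta:=\alpha-n\in[0,1)$ I will use the elementary bound $|e^{it}-1|\leq 2^{1-\beta}|t|^\beta$, giving
$$|(e^{i\bk\cdot(\x-\x')}-1)|\leq C|\bk|^\beta|\x-\x'|^\beta,$$
so the kernel is pointwise dominated by $C|\bk|^{\alpha-n}\langle\x-\x'\rangle^\alpha|H(\x,\x')|$. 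Schur--Holmgren then yields
$$\|D^{\boldsymbol{\mu}}H_\bk-D^{\boldsymbol{\mu}}H_0\|\leq C|\bk|^{\alpha-n}\,\|H\|_{\mathcal{C}^\alpha},$$
which is the claimed H\"older estimate at $\bk=0$. (The same argument with $\bk-\bk_0$ in place of $\bk$ even gives global H\"older continuity and in particular continuity of $D^{\boldsymbol{\mu}}H_\bk$ everywhere, completing the $C^n$ claim.)

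The only genuine subtlety is Step 2: one must avoid the naive estimate $|e^{it}-1-it|\leq t^2/2$, which would require $|\boldsymbol{\mu}|+2\leq\alpha$ and thus fail at $|\boldsymbol{\mu}|=n-1$ when $\alpha-n$ is small. The fractional interpolation between the quadratic bound and the trivial bound $\leq 2+|t|$ is what buys the needed room while only costing the available $\alpha-|\boldsymbol{\mu}|-1\geq 0$ powers of $\langle\x-\x'\rangle$.
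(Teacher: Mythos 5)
Your proof is correct and follows the same route as the paper: Step~1 is the paper's observation that the formal kernel derivatives are Schur--Holmgren bounded, and Step~3 is verbatim the paper's interpolation estimate $|e^{i\bk\cdot(\x-\x')}-1|\le 2^{1-\beta}|\bk|^\beta|\x-\x'|^\beta$ for the H\"older claim. Step~2 actually fills in the norm-differentiability that the paper only asserts, using the interpolation between the bounds $\le\tfrac12|h|t^2$ and $\le 2|t|$ on $\big|\tfrac{e^{iht}-1}{h}-it\big|$; one small sharpening is that this needs $\theta>0$, hence $|\boldsymbol\mu|+1<\alpha$ strictly, so at the top inductive step $|\boldsymbol\mu|+1=n$ one needs $\alpha>n$ -- consistent with the $(\alpha-n)$-H\"older statement being vacuous at integer $\alpha$ and with the paper's later standing assumption $n<\alpha<n+1$.
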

\begin{proof} Assume that $\bk=(k_1,k_2)$. 
The integral kernel of $H_\bk$ is $e^{i\bk\cdot
    (\x-\x')}H(\x,\x')$. Let $n$ be the integer part of $\alpha$. Then
  $H_\bk$ is $n$ times differentiable in the norm topology with respect to 
$k_j$, $j\in\{1,2\}$, and its
  $n$'th mixed partial derivative $\partial_{k_1}^m\partial_{k_2}^{n-m}H_\bk$ is given by the integral kernel $i^n(x_1-x'_1)^m(x_2-x'_2)^{n-m}e^{i\bk\cdot
    (\x-\x')}H(\x,\x')$. This integral kernel defines a bounded 
  operator because $|(x_1-x'_1)^m(x_2-x'_2)^{n-m}|\leq \langle
  \x-\x'\rangle ^n$ and then we can use \eqref{febr1}. 

For the H\"older continuity statement, we use the estimate 
$|e^{i\bk\cdot
    (\x-\x')}-1|\leq 2^{1-\beta}\; |\bk|^\beta |\x-\x'|^{\beta}$ which holds for
  every $0\leq \beta\leq 1$. 

\end{proof}

Now let $z\in\rho(H)$. Denote by
$G_\bk(\x,\x';z)$ the integral kernel of $(H_\bk-z)^{-1}$. Due to the
identity $U_\bk(H-z)^{-1}U_\bk^*=(H_\bk -z)^{-1}$ we have:
\begin{equation}\label{april2}
G_\bk(\x,\x';z)=e^{i\bk\cdot (\x-\x')}G(\x,\x';z).
\end{equation} 
Let us denote by $n$ the integer part of $\alpha$. We can suppose that
$n\geq 1$ since the case $0<\alpha<1$ is covered by the argument
below. 

From the identity
\begin{equation}\label{april5}
(H_{\bk'} -z)^{-1}-(H_{\bk}-z)^{-1}=-(H_{\bk'}
-z)^{-1}[H_{\bk'}-H_\bk](H_{\bk}-z)^{-1}
\end{equation} 
and from Lemma \ref{lema10} we conclude that the map 
$$\R^2\ni \bk\mapsto (H_\bk-z)^{-1}\in B(l^2(\Gamma))$$
is continuous in the norm topology, and also differentiable. We have:
\begin{equation}\label{april3}
D_\bk (H_{\bk} -z)^{-1}=-(H_\bk-z)^{-1}[D_\bk H_\bk](H_\bk-z)^{-1}.
\end{equation} 
Using this identity at $\bk=0$ in \eqref{april2} leads to:
$$(\x-\x')G(\x,\x';z)=-\langle (H-z)^{-1}[D_\bk
H_\bk]_{\bk=0}(H-z)^{-1}\delta_{\x'},\delta_\x\rangle $$
which gives:
$$||(H-z)^{-1}||_{\mathcal{H}^1}\leq C\;(1+||H||_{\mathcal{C}^1})\left (\frac{1}{{\rm
    dist}(z,\sigma(H))^2} +\frac{1}{{\rm
    dist}(z,\sigma(H))}\right ).$$
This is true because we have the pointwise bound 
\begin{align}\label{iulie1}
\langle \x-\x'\rangle^{2\alpha} &\leq  (1+|x_1-x_1'|+|x_2-x_2'|)^{2\alpha}\leq
(3\max\{1,|x_1-x_1'|,|x_2-x_2'|)^{2\alpha}\nonumber \\
&\leq 3^{2\alpha}+ \sum_{j=1}^2 3^{2\alpha}|x_j-x_j'|^{2\alpha}.
\end{align}

By induction we obtain the following rough estimate:
\begin{equation}\label{april4}
||(H-z)^{-1}||_{\mathcal{H}^n}\leq C_n\;(1+||H||_{\mathcal{C}_n}^n)\left (\frac{1}{{\rm dist}(z,\sigma(H))^{n+1}}
  +\frac{1}{{\rm
    dist}(z,\sigma(H))}\right ).
\end{equation}

Now let us assume that $n<\alpha<n+1$. The integral kernel of the
$n$'th partial derivative of $(H_{\bk} -z)^{-1}$ with respect to $k_1$
is given by $i^ne^{i\bk\cdot
  (\x-\x')}(x_1-x_1')^nG(\x,\x';z)$. Moreover, using \eqref{april3}
and Lemma \ref{lema10} we
conclude that the operator $\partial_{k_1}^n(H_{\bk} -z)^{-1}$ is
$\alpha-n$ H\"older continuous at $\bk=0$. Let $\bk=(k_1,0)$. 
We also have the identity:
$$i^n(e^{ik_1(x_1-x_1')}-1)(x_1-x_1')^nG(\x,\x';z)=\langle 
[\partial_{k_1}^n(H_{k_1} -z)^{-1}-\partial_{k_1}^n(H_{k_1}
-z)^{-1}|_{\bk=0}]\delta_{\x'}, \delta_\x\rangle .$$ 
If $|k_1|\leq 1$ the following norm estimate holds true according to 
Lemma \ref{lema10}: 
 \begin{align}\label{april15}
&||[\partial_{k_1}^n(H_{k_1} -z)^{-1}-\partial_{k_1}^n(H_{k_1}
-z)^{-1}|_{\bk=0}]||\nonumber \\
&\leq C |k_1|^{\alpha-n}(1+||H||_{\mathcal{C}^\alpha}^{n+1})\left (\frac{1}
{{\rm dist}(z,\sigma(H))^{n+2}}
+\frac{1}{{\rm
    dist}(z,\sigma(H))}\right).
\end{align}
Choose $n<\alpha'<\alpha<n+1$. Then the following integral converges
in norm and defines a bounded operator: 
$$\widetilde{H}:=\int_0^\infty \frac{1}{k_1^{1+\alpha'-n}}
[\partial_{k_1}^n(H_{k_1} -z)^{-1}-\partial_{k_1}^n(H_{k_1}
-z)^{-1}|_{\bk=0}]dk_1.$$
Its integral kernel is given by 
$$\widetilde{G}(\x,\x';z):=i^n(x_1-x_1')^nG(\x,\x';z)\int_0^\infty
\frac{1}
{k_1^{1+\alpha'-n}}(e^{ik_1(x_1-x_1')}-1)dk_1.$$
Assuming without loss of generality that $x_1-x_1'\neq 0$, and by a change
of variable $s=k_1\;|x_1-x_1'|$ we obtain:
$$\widetilde{G}(\x,\x';z)=|x_1-x_1'|^{\alpha'-n}(x_1-x_1')^n  G(\x,\x';z)\int_0^\infty
\frac{1}{s^{1+\alpha'-n}}i^n(e^{is\;{\rm sign}(x_1-x_1')}-1)ds.$$
Notice that the above integral only has two possible values $C_\pm$
both different from
zero, depending on the sign of $x_1-x_1$. 
Since $\widetilde{G}(\x,\x';z)=\langle
\widetilde{H}\delta_{\x'},\delta_{\x}\rangle
=C_\pm(x_1,x_1')\;|x_1-x_1'|^{\alpha'-n}(x_1-x_1')^n G(\x,\x';z)$ with
$|C_\pm(x_1,x_1')|\geq C$ 
it follows that 
$$\sup_{\x'\in\Gamma}\sum_{\x\in\Gamma}|x_1-x_1'|^{2\alpha'}
\;|G(\x,\x';z)|^2\leq C^{-2}
||\widetilde{H}||^2.$$
This argument can be repeated for the other coordinate and bound
the $l^2$ norm of $\langle \cdot -\x'\rangle^{\alpha'}
G(\cdot,\x';z)$ using \eqref{iulie1}. The proof of Proposition
\ref{prop1} is over.  

\vspace{0.5cm}

\subsection{A few identities from the continuous case}

We list here a few well known facts about the continuous two
dimensional magnetic Schr\"odinger operator with constant magnetic field equal to $b$
in $L^2(\R^2)$: 
\begin{align}\label{queryc}
H_b=({\bf p}-b{\bf a}({\bf x}))^2,\quad {\bf p}=-i\nabla_{\bf
  x},\quad{\bf a}({\bf x})=(-x_2/2,x_1/2 ).
\end{align}

The integral kernel of the semi-group $e^{-tH_b}$ is denoted with 
$G_b(\x,\x';t)$ and is given by the following explicit formula: 
\begin{align}\label{ianu1}
G_b(\x,\x';t)=e^{ib\varphi({\bf x},{\bf
    x}')}\frac{b}{4\pi\sinh ( bt)}
\exp{\left [-\frac{b|{\bf x}-{\bf x}'|^2}{4\tanh ( bt)}\right ]}
=: e^{ib\varphi({\bf x},{\bf
    x}')}\widetilde{G}_b(\x,\x';t).
\end{align}
The semigroup property insures the following identity:
\begin{align}\label{ianu2}
G_b(\x,\x';2t)=\int_{{\bf R}^2}G_b(\x,\y;t)G_b(\y,\x';t)d{\bf y}.
\end{align}
Then we can write:
\begin{align}\label{ianu3}
e^{ib\varphi({\bf x},{\bf
    x}')}&=\frac{1}{\widetilde{G}_b(\x,\x';2t)}
\int_{{\bf R}^2}G_b(\x,\y;t)G_b(\y,\x';t)d{\bf y}\nonumber \\
&=\frac{4\pi\sinh (2 bt)}{b}
\exp{\left [\frac{b|{\bf x}-{\bf x}'|^2}{4\tanh ( 2bt)}\right ]}
\int_{{\bf R}^2}G_b(\x,\y;t)G_b(\y,\x';t)d{\bf y}.
\end{align}
Taking the complex conjugation in both sides gives:
\begin{align}\label{ianu3'}
e^{-ib\varphi({\bf x},{\bf
    x}')}=\frac{4\pi\sinh (2 bt)}{b}
\exp{\left [\frac{b|{\bf x}-{\bf x}'|^2}{4\tanh ( 2bt)}\right ]}
\int_{{\bf R}^2}G_b(\y,\x;t)G_b(\x',\y;t)d{\bf y}.
\end{align}

Again the semi-group property gives that:
\begin{align}\label{ianu4}
\frac{b}{4\pi\sinh ( 2bt)}=G_b(\x,\x;2t)=
\int_{{\bf R}^2}G_b(\x,\y;t)G_b(\y,\x;t)d{\bf y}=\int_{{\bf R}^2}|G_b(\y,\x;t)|^2d{\bf y}
\end{align}
which is clearly $\x$ independent.

\vspace{0.5cm}

\noindent {\bf Acknowledgments.}  H.C. acknowledges support from the Danish 
F.N.U. grant {\it  Mathematical Physics}. The author is deeply
indebted to Gheorghe Nenciu for his encouragements and for 
many fruitful discussions.

\end{document}